\documentclass[reqno,a4paper,12pt]{amsart}


\usepackage[a4paper,hmargin=2cm,tmargin=3cm,bmargin=3cm]{geometry}
\usepackage[utf8x]{inputenc}
\usepackage[T1]{fontenc}
\usepackage{mathptmx}
\usepackage{amsmath,amssymb,amstext,amsthm,amscd,mathrsfs,eucal,bm,slashed}
\usepackage{graphicx,color}
\usepackage{array}
\usepackage{cite}
\usepackage{hyperref}
\usepackage{amsfonts}
\usepackage{amscd}
\usepackage{verbatim} 
\usepackage{fancyhdr}
\usepackage{tikz}
\usepackage{tikz-cd}
\usepackage{slashed}
\usetikzlibrary{matrix}
\usetikzlibrary{positioning}
\usepackage[all]{xy}

\numberwithin{equation}{section}

 
\theoremstyle{plain}
\newtheorem{thm}{Theorem}[section]
\newtheorem{prop}[thm]{Proposition}
\newtheorem{lem}[thm]{Lemma}

\newtheorem{remark}[thm]{Remark}


\newcommand{\la}{\langle}
\newcommand{\ra}{\rangle}
\renewcommand{\a}{\alpha}
\renewcommand{\b}{\beta}
\renewcommand{\c}{\nabla}

\newcommand{\G}{\Gamma}
\newcommand{\C}{\mathscr{C}}
\newcommand{\e}{\epsilon}
\newcommand{\ve}{\varepsilon}
\newcommand{\vp}{\varphi}
\renewcommand{\l}{\lambda}
\renewcommand{\L}{\Lambda}
\renewcommand{\k}{\kappa}
\newcommand{\m}{\mu}
\newcommand{\n}{\nu}
\renewcommand{\d}{\delta}

\newcommand{\w}{\omega}
\newcommand{\Om}{\Omega}
\renewcommand{\t}{\tau}
\newcommand{\T}{\mathscr{T}}
\newcommand{\U}{\Upsilon}

\newcommand{\p}{\partial}
\newcommand{\M}{\mathscr{M}}
\newcommand{\mc}{\mathcal}
\newcommand{\mf}{\mathfrak}
\newcommand{\mbb}{\mathbb}

\newcommand{\mr}{\mathring}

\newcommand{\wt}{\widetilde}
\newcommand{\wh}{\widehat}
\DeclareMathOperator{\tho}{\text{\rm\th}}
\DeclareMathOperator{\edt}{\text{\rm\dh}}
\newcommand{\PTc}{\mathbb{P}\mathcal{T}}
\newcommand{\CP}{\mbb{C}{\rm P}}
\newcommand{\SL}{{\rm SL}(2,\mathbb{C})}


\begin{document}

\title{Two-dimensional twistor manifolds and Teukolsky operators}

\author[B. Araneda]{Bernardo Araneda}
\address{Facultad de Matem\'atica, Astronom\'{\i}a, F\'{\i}sica y Computaci\'on\\
Universidad Nacional de C\'ordoba\\ 
Instituto de F\'{\i}sica Enrique Gaviola, CONICET\\
Ciudad Universitaria, (5000) C\'ordoba, Argentina} 
\email{baraneda@famaf.unc.edu.ar}
\date{December 12, 2019}

\thanks{The current version of this paper is based upon work supported by the Swedish Research 
Council under grant no. 2016-06596 while the author was in residence at Institut Mittag-Leffler in Djursholm, 
Sweden, during the fall 2019.}

\begin{abstract}

The Teukolsky equations are currently the leading approach for analysing stability of 
linear massless fields propagating in rotating black holes.
It has recently been shown that the geometry of these equations can be understood in 
terms of a connection constructed from the conformal and complex structure of Petrov type D spaces. 
Since the study of linear massless fields by a combination of conformal, complex and spinor methods 
is a distinctive feature of twistor theory, and since versions of the twistor equation 
have recently been shown to appear in the Teukolsky equations, this raises the question 
of whether there are deeper twistor structures underlying this geometry. 
In this work we show that all these geometric structures can be understood naturally by considering 
a {\em 2-dimensional} twistor manifold, whereas in twistor theory the standard (projective) twistor 
space is 3-dimensional.

\end{abstract}

\maketitle

\section{Introduction}

Twistor theory \cite{Pen67, PMC73} was originally conceived by Roger Penrose as a possible approach 
to quantum gravity, in which spacetime is no longer a fundamental entity but it is secondary to a more 
primitive structure. This structure is twistor space, which is (in its projective version) a three-dimensional 
complex manifold whose points correspond to `totally null 2-surfaces' in the spacetime.
The requirement that the twistor space so defined be {\em three}-dimensional forces the 
conformal curvature to be self-dual (SD) or anti-self-dual (ASD), which unfortunately is of little 
interest for the classical Lorentzian curved spacetimes of General Relativity.
In this work we study geometric constructions 
that a {\em two}- (rather than {\em three}-) dimensional moduli space of totally null 2-surfaces
induces on a 4-dimensional conformal structure, and their 
applications to the description of linear massless fields propagating on an algebraically special space.

\smallskip
Our main motivation comes from the apparently unrelated problem of black hole stability. 
The Teukolsky equations were found in \cite{Teu72, Teu73} 
and constitute currently the leading approach for analysing linear stability of massless 
fields propagating in a black hole spacetime. 
They are scalar, second order, partial differential equations involving only one component 
(in an appropriate frame) of the linear field under consideration.
The original derivation \cite{Teu73} is in terms of the Newman-Penrose (NP) formalism. 
One has to apply certain NP operators to the 
field equations written in NP form, and then make appropriate combinations of the 
resulting identities so as to obtain a differential equation for only one NP component of the field.
Even though there does not seem to be explicit geometric structures underlying this procedure,
in \cite{Bin02} it was found that, for the case of the Kerr spacetime, the Teukolsky equations 
have the form of a wave equation with potential in terms of a modified wave operator;
and in \cite{Aks11} this was generalized for all vacuum spacetimes of Petrov type D.
Generalized derivatives in physics appear naturally in gauge theories, where they indicate 
the presence of internal symmetries in the system and have a rich geometry associated to them; 
thus it is natural to ask whether the Teukolsky equations have such a geometric interpretation. 
Further interest in this question arises when taking into account the result found in \cite{Araneda2017} 
that certain spinor fields involved in the equations satisfy the twistor equation with 
respect to the Teukolsky derivative.
The problem of uncovering the underlying geometry was addressed in \cite{Araneda2018}, where, 
by using spinor methods, it was found that it can be understood from consideration of 
conformal and complex structures in the spacetime.
Now, since the combination of conformal, complex and spinor geometry in four dimensions is 
a natural arena for twistor theory,
the appearance in the same problem of (versions of) the twistor equation together 
with conformal and complex structures suggests that more profound aspects of twistor theory
could be involved in the problem.
This is further supported by the well-known result that twistor theory is especially 
powerful for studying massless free fields (although this is for the case of flat or 
(anti-)self-dual spacetimes).
Motivated by these facts, one of the main aims of this work is to demonstrate that 
deeper structures in twistor theory effectively underlie the geometry of the Teukolsky equations. 

\smallskip
Although the original developments in twistor theory were mainly concerned with the structure 
of General Relativity and its quantization, currently its main applications in physics 
are in the study of scattering amplitudes in particle physics and string theory 
(see the recent review \cite{Ati17}). 
Our results show that twistor methods can still be fruitfully applied 
to classical problems in General Relativity that are of current interest, and that they are very 
useful for the uncovering and understanding of geometric structures in these problems.

\subsection{Main results and overview}

The main result of this work is to establish a close relationship between 2-dimensional (2D) twistor 
manifolds and the Teukolsky (and related) equations.
This 
twistor manifold
is a 2D moduli space of totally null 2-surfaces,
and it has three crucial properties for us: 
$(i)$ it is associated to a {\em projective} spinor $[\xi^A]$ (we have an equivalence relation 
$\xi^A\sim\l\xi^A$),
$(ii)$ it is associated to a {\em conformal} structure $[g_{ab}]$ (we have an equivalence relation 
$g_{ab}\sim\Om^2 g_{ab}$), and
$(iii)$ it is a {\em complex} manifold (we have a complex structure $J^2=-1$).
These three properties are archetypal of a twistor space.

Section \ref{sec-TT} is a brief review of some basic aspects of twistor theory that are needed 
in the paper: the twistor equation, the definition of twistor space, and the Penrose transform 
for massless free fields.
Sections \ref{sec-2DTM} and \ref{sec-2TM} are devoted to our main results, where we study 
geometric constructions derived from the existence of 2D twistor spaces.
In section \ref{sec-csb} we show how a 2D twistor space $\T$ induces natural geometric structures 
in the spinor bundles
of a conformal manifold; in section \ref{sec-FB}, inspired by standard constructions in twistor theory,
we construct fibre bundles over $\T$ by using the previous geometric structures and their properties; 
and in section \ref{sec-te} we show how these constructions are related to the Teukolsky equations.
In particular, we show that line bundles over $\T$ give naturally solutions of these equations 
(for the case associated to massless free fields), in a manner that is reminiscent of the mechanisms 
involved in the Penrose transform.
Although gravitational perturbations are not included in this scheme for a number of reasons,
we make some comments regarding this case in section \ref{sec-gp}; in particular,
we show that metric reconstructions from Hertz potentials still admit a 2D twistor space, 
and we comment on possible consequences and applications of this result.
Finally we consider in section \ref{sec-2TM} the special case in which there are two independent
2D twistor spaces, which is naturally associated to Petrov type D spacetimes, and we reinterpret the 
previous constructions in terms of holomorphic structures.
We make some final remarks in section \ref{sec-FC}.

\subsection{Notation and conventions}

We work in 4-dimensional spacetimes $(\M,g_{ab})$ that admit a spinor structure and that 
are real-analytic, since we will often need to complexify the spacetime.
(See e.g. \cite[Section 6.9]{Penrose2} for the general rule when translating formulas from 
real to complex spacetimes.)
Our conventions follow those of Penrose and Rindler \cite{Penrose1, Penrose2}.
Indices $a,b,c,...$ are (abstract) 4-dimensional spacetime indices, while $A,B,...$ and 
$A',B',...$ are (abstract) 2-dimensional spinor indices. Boldface letters ${\bf A,B,...}$ 
etc. denote indices in a spin frame.
When considering complex spacetimes, the local Lorentz symmetry ${\rm SO}(1,3)$ 
is replaced by the complex rotations ${\rm SO}(4,\mbb{C})$. One has the isomorphism
\begin{equation}\label{so4c}
 {\rm SO}(4,\mbb{C})=(\SL_L\times\SL_R)/\mbb{Z}_2,
\end{equation}
where the subscripts $L,R$ mean `left' and `right' rotations, acting respectively on 
spinors with `unprimed' and `primed' indices.
The correspondence between vectors and spinors is via the soldering form, i.e.
$v^a\mapsto v^{AA'}=v^a\sigma_{a}{}^{AA'}$. This allows the identification 
$a\equiv AA'$, $b\equiv BB'$, etc., and in this work we will omit the soldering form 
$\sigma_{a}{}^{AA'}$.
Two complex-conjugate quantities $\Psi$ and $\bar\Psi$ that appear together in a 
real spacetime, become two independent quantities $\Psi$ and $\tilde\Psi$ in a complex 
spacetime; for example, the Weyl conformal spinor and its conjugate are independent 
entities $\Psi_{ABCD}$ and $\tilde{\Psi}_{A'B'C'D'}$ in the complex case.
Given a vector bundle $E$ over some manifold, the space of sections of $E$ 
will be denoted by $\G(E)$.

\section{Preliminaries on Twistor Theory}\label{sec-TT}

We review some basic aspects of the twistor equation in section \ref{sec-TE}, 
together with possible generalizations.
In section \ref{sec-TS} we give the definition of twistor space 
and its relation to spacetime by using the double fibration picture, both in the 
flat and in the curved spacetime case.
In section \ref{sec-PT} we recall the Penrose transform, that relates massless 
fields in the spacetime with sheaf cohomology classes over twistor space, and 
we give some explicit formulas for the fields in terms of cohomology elements.
(These constructions will be invoked in section \ref{sec-2DTM}.)
Except for section \ref{sec-TE}, we will work in {\em dual} twistor space (in the 
usual terminology of twistor theory).
The main references we follow in this section are \cite{Ada17, Penrose2, Ward, Huggett, Wit03}.

\subsection{The twistor equation}\label{sec-TE}

The twistor equation is\footnote{This subsection is related to the `usual' twistor space, 
i.e. not to its `dual' version, which is the one that we use in the rest of the paper.} 
\begin{equation}\label{teq}
 \c_{A'}{}^{(A}\w^{B)}=0,
\end{equation}
where $\w^A$ is a spinor field on a four-dimensional spacetime with spin structure and 
Levi-Civita connection $\c_{AA'}$.
In a flat spacetime, (\ref{teq}) can be thought of as a consequence of the `incidence relation', 
which is the (non-local) relation between points in spacetime and points in twistor space 
(see the next subsection).
In a curved spacetime, (\ref{teq}) imposes severe restrictions on the curvature: the 
integrability conditions are $\Psi_{ABCD}\w^D=0$, which for non-trivial $\w^A$
imply that the spacetime must be of Petrov type N or O. A possible generalization of (\ref{teq}) is 
\begin{equation}\label{ks}
 \c_{A'}{}^{(A}\w^{B...L)}=0,
\end{equation}
for some symmetric spinor field $\w^{A...K}$ with $n$ indices. Solutions to (\ref{ks}) 
are usually known as Killing spinors or twistor spinors.
A particularly relevant example of (\ref{ks}) corresponds to a 2-index Killing spinor, $\w^{AB}$, 
since it is well-known that all Einstein spacetimes of Petrov type D (in particular the Kerr solution) 
admit such object,  
which is associated to `hidden symmetries' in the spacetime and has found a lot of 
important applications both in past and recent years, see e.g. \cite{WP70, And09, And14, Fro17}.

Another possible generalization of (\ref{teq}) is to change the connection $\c_{AA'}$ to some other 
connection $D_{AA'}$,
\begin{equation}\label{wteq}
 D_{A'}{}^{(A}\w^{B)}=0,
\end{equation}
which can be regarded as a `charged' (or `weighted') twistor equation. As observed by Bailey 
\cite{Bailey, BaileyTN262, Bailey2}, this equation arises naturally for example in spacetimes 
that possess a shear-free null geodesic congruence; we will exploit this fact in section \ref{sec-2DTM}.
We also mentioned in the introduction that it
arises in the study of the Teukolsky equations: there exists a covariant derivative $D_a$ 
(the `Teukolsky connection') whose square $D^aD_a$ is the Teukolsky operator, and 
certain spinor fields involved in the equations satisfy (\ref{wteq}). (See the introduction 
in \cite{Araneda2018}.) This fact is actually one of the main motivations for the present work.

The approach to twistor theory by means of the twistor equation (\ref{teq}) (or its generalizations 
(\ref{ks}), (\ref{wteq})) emphasizes the use of {\em spinor fields on the spacetime} that satisfy 
differential equations. 
This point of view is perhaps not very convenient for the twistor treatment of curved spacetimes, 
since, as mentioned, the differential equations involved have integrability conditions that 
restrict the spacetime curvature.
Furthermore, in the original twistor programme, spacetime itself is a derived structure, 
that is secondary to the more primitive twistor space. 
This has profound implications in the nature of physical concepts; in particular, 
there is a non-local relation between points in spacetime and points in twistor space.
There are still (strong) restrictions on the curvature, but we find this construction of 
twistor space to be more suitable for the purposes of the present work.
Below we will briefly review the definition of twistor space as the moduli space of certain 
2-dimensional surfaces in the spacetime; 
this will proven to be more useful for the constructions studied in section \ref{sec-2DTM}.

\subsection{Twistor space}\label{sec-TS}

Let $\mbb{M}$ be (complexified) Minkowski spacetime.
Flat twistor space is $\mbb{T}\cong\mbb{C}^4$, and its coordinates are pairs 
of Weyl spinors of opposite quirality, $\mbb{T}\ni Z^{\a}=(\w^A,\pi_{A'})$. 
For our purposes it is more convenient to use instead {\em dual} twistor space, 
$\mbb{T}^{*}$, with coordinates $W_{\a}=(\l_A,\m^{A'})$. 
The relation between spacetime events $x^{AA'}\in\mbb{M}$ and points in 
$\mbb{T}^{*}$ is given by the so-called incidence relation: 
\begin{equation}\label{ir}
 \m^{A'}=-ix^{AA'}\l_{A}.
\end{equation}
(The twistor equation (\ref{teq}) is obtained by taking a spacetime derivative in the 
complex conjugate of (\ref{ir}).)
This equation remains true if we multiply $(\l_A,\m^{A'})$ by a non-zero complex 
number, so (\ref{ir}) actually defines a relation between spacetime and 
{\em projective} twistor space (we will generally omit the term `dual', and later 
also `projective'), $\mbb{PT}^{*}\cong\CP^3$, and one often works in 
this space instead of $\mbb{T}^*$.
If we fix $x^{AA'}$, then (\ref{ir}) defines a projective line $\CP^1$ in 
$\mbb{PT}^*$, whose topology is $S^2$. On the other hand, if we fix $(\l_A,\m^{A'})$, 
the set of $x^{AA'}$ that satisfy (\ref{ir}) turns out to be a 2-plane in $\mbb{M}$ that 
is totally null: every tangent to it has the form $\l^A\zeta^{A'}$ for fixed $\l^A$ 
and varying $\zeta^{A'}$. This 2-plane is called {\em $\b$-plane}.
Projective (dual) twistor space $\mbb{PT}^*$ is the space of $\b$-planes\footnote{If we
fix $\zeta^{A'}$ and vary $\l^A$ instead, the resulting 2-plane is an `$\a$-plane', and 
(projective) twistor space $\mbb{PT}$ is the space of $\a$-planes.}. 

The correspondence between twistor space and spacetime can be conveniently described 
via a double fibration. Let $\mbb{PS}_A$ be the projective spin bundle over $\mbb{M}$. 
The fibre over a point $x\in\mbb{M}$ is the projective space $\CP^1$. 
(Actually $\mbb{PS}_A$ is globally $\mbb{M}\times\CP^1$.)
The projection $\n$ over $\mbb{M}$ is simply $(x^a,\l_{A})\mapsto x^a$. 
$\mbb{PS}_A$ also projects to $\mbb{PT}^{*}$ by means of the incidence relation 
(\ref{ir}), i.e. via the map $\m$ given by $(x^a,\l_{A})\mapsto(\l_A,-ix^{AA'}\l_{A})$.
The double fibration is then
\begin{equation}\label{dfibr}
\begin{tikzcd}
 \mbox{} & \mbb{PS}_A \arrow[swap]{ld}{\m} \arrow{rd}{\n} & \mbox{} \\
 \mbb{PT}^{*} & \mbox{} & \mbb{M}
\end{tikzcd}
\end{equation}
This fibration represents the basic idea of twistor theory: Physics in the 
spacetime $\mbb{M}$ is translated into holomorphic data in twistor space $\mbb{PT}^{*}$.
One of the most prominent examples of this correspondence is the Penrose transform 
that we briefly review below.
Note that, similarly to the fact that the inverse image of a point $x\in\mbb{M}$ 
under $\n$ is the fibre $\n^{-1}(x)\cong\CP^1$, the inverse image of a point 
$(\l_A,\m^{A'})\in\mbb{PT}^{*}$ under $\m$ is the set of $x^{AA'}$ such that 
$\m^{A'}=-ix^{AA'}\l_{A}$, namely the whole $\b$-plane.

The (curved) twistor space associated to a {\em curved} spacetime is defined by generalizing 
the concept of $\b$-planes. (The resulting construction is known as the `Non-linear graviton' 
since the work of Penrose \cite{Pen76}.) A {\em $\b$-surface} in a complex spacetime $\M$ is a 
2-dimensional surface such that its tangent plane at each point is a $\b$-plane.
One can show (see the initial discussion in section \ref{sec-2DTM} below)
that the integrability conditions for the existence of a three-complex parameter 
family of $\b$-surfaces are $\Psi_{ABCD}\equiv0$, so the spacetime must be conformally 
half-flat (i.e. the conformal curvature must be SD).
The resulting 3-manifold is the (projective, dual) twistor space $\PTc^{*}$ of $\M$.
In the opposite direction, if the spacetime is SD, then one can see that it admits a 
complex 3-manifold of $\b$-surfaces, so the correspondence is one-to-one. 
Actually, the correspondence involves only the conformal structure of the spacetime, 
since the construction above is conformally invariant. 
By imposing additional conditions on $\M$, such as the vacuum Einstein equations, 
one obtains additional structures on $\PTc^{*}$. (We will not need these structures in 
this work; for details see e.g. \cite{Pen76}, \cite[Ch. 12]{Huggett} and \cite[Ch. 9]{Ward}.)
A double fibration picture like (\ref{dfibr}) relating $\PTc^{*}$ and $\M$ also applies, 
where the correspondence space is the projective spin bundle $\mbb{P}\mc{S}_{A}$.
Since a $\b$-plane is associated to a projective spinor $\l_A$, the set of $\b$-planes 
through a given point $x\in\M$ is parametrized by the projectivization of $\mbb{C}^2$, namely
$\CP^1$, thus, as in the flat case, a point in $\M$ corresponds to a projective line $\CP^1$ 
in $\PTc^{*}$. On the other hand, a point in $\PTc^{*}$ corresponds to a $\b$-surface in $\M$.

\subsection{The Penrose transform for massless fields}\label{sec-PT}

One of the most important results in twistor theory is the Penrose transform 
for massless fields: an isomorphism between solutions of the massless free field 
equations in the spacetime and certain sheaf cohomology groups over twistor space.
We recall that the massless free field equations of helicity $h$ are
\begin{subequations}
\begin{align}
 & \c^{AA'}\phi_{A'...F'}=0, \qquad \text{for } h>0 \label{RH} \\
 & \c^{AA'}\vp_{A...F}=0, \qquad \text{for } h<0 \label{LH} \\
 & \Box\vp=0, \qquad \text{for } h=0 \label{weq}
\end{align}
\end{subequations}
where the fields $\phi_{A'...F'}$ and $\vp_{A...F}$ are totally symmetric and 
have $2|h|$ indices each, and $\Box=\c^{AA'}\c_{AA'}$.
Solutions of (\ref{RH}) are called right-handed (RH) fields, and solutions of (\ref{LH})
are called left-handed (LH) fields.
In its original form the correspondence applies to Minkowski spacetime\footnote{There are 
important subtleties that we are omitting here, namely the fact that it is not actually the 
whole $\mbb{PT}^{*}$ which enters (\ref{Ptransform}) but the region with $\l_{A}\neq0$; 
we do not need to discuss this for the purposes of our presentation.}:
\begin{equation}\label{Ptransform}
 \left\lbrace\begin{matrix}\text{massless free fields} \\ 
 \text{of helicity $h$ in $\mbb{M}$} \end{matrix} \right\rbrace
 \cong \breve{H}^{1}(\mbb{PT}^{*},\mc{O}(2h-2)),
\end{equation}
where the right-hand side is a \u{C}ech cohomology group that we shortly discuss below.
The necessity of using cohomology can be understood by examining the representation 
of massless free fields as contour integrals of certain holomorphic functions over twistor space, 
since Penrose realized that the ``gauge'' freedom that one has in choosing these twistor functions 
is precisely that of a \u{C}ech representative of a cohomology class in $\mbb{PT}^{*}$.
The correspondence (\ref{Ptransform}) can be generalized to some extent to 
SD spacetimes (see \cite{Ward} and references therein for more details). 
More precisely, there is an isomorphism like (\ref{Ptransform}) for the case of 
negative helicity (i.e. LH fields), but for the case of positive helicity (RH fields) 
the analogous result involves {\em potentials} instead of the fields.
We will briefly review how to extract the spacetime field from a given cohomology element;
this will be useful in section \ref{sec-2DTM} for making some analogies between this 
procedure and the constructions thereof.
We work in a SD spacetime that satisfies the vacuum Einstein equations, i.e. such that 
$\Psi_{ABCD}=0$ and $\Phi_{ABA'B'}=0=\L$. This implies that we can use covariantly 
constant unprimed spinors, i.e. $\c_{AA'}\l_B=0$; below we will use this fact.
We found particularly useful the presentation in appendix A of \cite{Wit03}.

\medskip
One can describe the correspondence (\ref{Ptransform}) in terms of \u{C}ech or Dolbeault
cohomology; we will use the \u{C}ech approach here.
This is a cohomology theory based on a covering $\mc{U}=\{U_i\}$ of a topological space $X$. 
In order to introduce several concepts that we will be referring to below,
we now review in a rather elementary way some basic facts about \u{C}ech cohomology,
using notation that resembles closely the operations with differential forms and 
de Rham cohomology. (We follow mainly \cite{Wells, Ward, Penrose2}.)
A {\em sheaf} $\mc{S}$ (of abelian groups) over $X$ is essentially an assignment 
$U_i\to\mc{S}(U_i)$ of an abelian group $\mc{S}(U_i)$ (whose elements are called 
{\em sections of $\mc{S}$ over $U_i$}) to each open set $U_i$ in the covering $\mc{U}$, 
together with `restriction maps' $\mc{S}(U_i)\to\mc{S}(U_j)$ for $U_j\subset U_i$ and 
some additional conditions that we do not need to discuss here.
For example, if $E\to X$ is a vector bundle over $X$, the assignment $\mc{S}(U_i)=\G(U_i,E)$ 
(that is, the sections of $E$ over $U_i$) defines the so-called {\em sheaf of 
sections of the vector bundle $E$}.
Given $q+1$ sets $U_{i_0},...,U_{i_{q}}$ in $\mc{U}$ such that 
$U_{i_0}\cap...\cap U_{i_{q}}\neq\emptyset$, a {\em $q$-cochain} $f$ is a set of sections 
$\{f_{i_0...i_{q}}\}$ defined by 
$f_{i_{0}...i_{q}}:=f(U_{i_0},...,U_{i_{q}})\in\mc{S}(U_{i_0}\cap...\cap U_{i_{q}})$
that are totally antisymmetric, $f_{i_0...i_{q}}=f_{[i_0...i_{q}]}$.
The set of $q$-cochains is denoted by $C^q(\mc{U},\mc{S})$, and it is an abelian group 
(under pointwise addition).
Denoting the restriction of $f_{i_0...i_{q}}$ to $U_{i_0}\cap...\cap U_{i_{q}}\cap U_{i_{q+1}}$
by $f_{i_0...i_{q}}|_{i_{q+1}}$, the $q$-th coboundary operator, 
$\d^q:C^q(\mc{U},\mc{S})\to C^{q+1}(\mc{U},\mc{S})$, is defined by 
$(\d^q f)_{i_0...i_{q+1}}:=(q+1)f_{[i_0...i_q}|_{i_{q+1}]}$.
Since the composition $\d^{q+1}\circ\d^{q}$ vanishes, we have the complex 
$\cdots C^{q-1}(\mc{U},\mc{S})\to C^q(\mc{U},\mc{S}) \to C^{q+1}(\mc{U},\mc{S})\to \cdots$, 
and the cohomology of this complex gives the \u{C}ech cohomology groups.
More precisely, a {\em $q$-cocycle} is an element in the kernel of $\d^q$, that is 
$(\d^q f)_{i_0...i_{q+1}}=0$, and the set of $q$-cocycles is $Z^q(\mc{U},\mc{S}):=\ker\d^q$.
A {\em $q$-coboundary} is an element in the image of $\d^{q-1}$, that is 
$f_{i_{0}...i_{q}}=(\d^{q-1}h)_{i_0...i_{q}}$ for some $h\in C^{q-1}(\mc{U},\mc{S})$, 
and the set of $q$-coboundaries is $B^q(\mc{U},\mc{S}):={\rm im}\;\d^{q-1}$.
Then the {\em $q$-th \u{C}ech cohomology group}, with coefficients in the sheaf $\mc{S}$
and with respect to the covering $\mc{U}$, is defined as the quotient
\begin{equation*}
 \breve{H}^q(\mc{U},\mc{S}):=Z^q(\mc{U},\mc{S})/B^q(\mc{U},\mc{S}).
\end{equation*}
Under certain circumstances the \u{C}ech cohomology groups do not depend on the covering 
(these are called Leray covers); this will be the case below and so we can write 
$\breve{H}^{q}(X,\mc{S})$. The topological space in our 
context is projective (dual) twistor space, but in practice, using the double fibration 
(\ref{dfibr}) we will only need cohomology over a projective line, so $X=\CP^1$. 
(This space can be covered by two open sets: $U_0=\{\l_{A}|\l_{0}\neq0\}$ and 
$U_1=\{\l_{A}|\l_{1}\neq0\}$.) 
Over $\CP^1$ one defines the complex line bundles $\mc{O}(k)$, $k\in\mbb{Z}$, whose 
sections are complex-valued functions homogeneous of degree $k$ in the 
homogeneous coordinates of $\CP^1$, that is $f(z\l_A)=z^k f(\l_A)$.
The sheaf $\mc{S}$ will be the sheaf of sections of $\mc{O}(k)$,
which is also denoted by $\mc{O}(k)$.

We will only need the zeroth and first cohomology groups. By construction, the 
0-th cohomology group coincides with the space of global sections of the sheaf. 
In our case one can show that (see e.g. Example 2.13 in \cite[Chapter I]{Wells})
\begin{equation}\label{H0}
 \breve{H}^0(\CP^1,\mc{O}(k))=
 \begin{cases} 
  0, \qquad k<0 \\
  \mbb{C}, \qquad k=0 \\
  \text{homogeneous polynomials of degree $k$ in }\mbb{C}^2, \qquad k>0
 \end{cases}
\end{equation}
For the first cohomology group one has
\begin{equation}\label{H1}
 \breve{H}^1(\CP^1,\mc{O}(k))=
 \begin{cases} 
  \mbb{C}^{-k-1}, \qquad k<-1 \\
  0, \qquad k\geq-1 
 \end{cases}
\end{equation}

Suppose $\PTc^{*}$ is covered by open sets $\mc{V}_i$.
A cohomology class in $\breve{H}^1(\PTc^{*},\mc{O}(2h-2))$ is represented by 
a 1-cocycle $f_{ij}$ (modulo coboundaries).
It is convenient to think of $f_{ij}$ as a function on the spin bundle by means of its 
pull-back by $\m$, using the (curved version of the) double fibration (\ref{dfibr}) 
(see e.g. \cite[Section 9.1]{Ward}). 
More precisely, let $V_i=\m^{-1}(\mc{V}_i)$, which is an open set on the spin bundle. 
We think of $f_{ij}$ as a function on $V_i\cap V_j$, $f_{ij}(x,\l)$, which is homogeneous 
of degree $2h-2$ in $\l_A$, and constant on $\b$-surfaces:
\begin{equation}\label{twf0}
 \c_{X}f_{ij}(x,\l)=0
\end{equation}
for all $X$ tangent to the $\b$-surface associated to $\l^A$. 
Since these tangents are of the form $X^a=\l^A\zeta^{A'}$ for arbitrary $\zeta^{A'}$, this is 
equivalent to
\begin{equation}\label{twf}
 \l^{A}\c_{AA'}f_{ij}(x,\l)=0.
\end{equation}

\begin{remark}\label{rem-we}
Taking an additional derivative $\c_{B}{}^{A'}$ in (\ref{twf}), we see that $f_{ij}$
solves the wave equation
\begin{equation}\label{we}
\Box f_{ij}(x,\l)=0.
\end{equation}
We will invoke this fact later on when studying 2-dimensional twistor manifolds; 
in particular, we will see that the Teukolsky equations are the natural generalization of 
(\ref{we}) in this context (see remark \ref{rem-teq} below).
\end{remark}

Now, for fixed $x\in\M$, $f_{ij}$ can be thought of as a 1-cocycle in 
$\breve{H}^1(\CP^1,\mc{O}(2h-2))$. Consider first the case of positive helicity.
From the $k\geq-1$ case in (\ref{H1}) we know that $\breve{H}^1(\CP^1,\mc{O}(2h-2))=0$ 
for $h\geq1/2$. This implies that $f_{ij}$ is a coboundary, i.e. it splits as 
$f_{ij}(x,\l)=h_i(x,\l)-h_j(x,\l)$, where $h_i$ is holomorphic on $V_i$ and $h_j$ 
is holomorphic on $V_j$. Using (\ref{twf}), we deduce that 
\begin{equation}\label{gf}
 \l^{A}\c_{AA'}h_i(x,\l)=\l^{A}\c_{AA'}h_j(x,\l).
\end{equation}
This equation defines a (spinor-valued) global function in $\CP^1$, homogeneous 
of degree $2h-1$ (with $h\geq1/2$), i.e. an element of $\breve{H}^0(\CP^1,\mc{O}(2h-1))$. 
From this we can extract the RH fields as follows.

For $h=1/2$, (\ref{gf}) defines an element of $\breve{H}^0(\CP^1,\mc{O}(0))$. 
From the $k=0$ case in (\ref{H0}), we deduce that (\ref{gf}) must be constant 
as a function of $\l_{A}$, so we get a field on the spacetime:
\begin{equation}\label{RHD}
 \phi_{A'}:=\l^{A}\c_{AA'}h_i(x,\l)
\end{equation}
Now, we have $\c^{AA'}\phi_{A'}=\frac{1}{2}\l^{A}\Box h_i(x,\l)$. 
Equation (\ref{we}) implies $\Box h_i(x,\l)=\Box h_j(x,\l)$, 
but this last equation defines a global function in $\CP^1$ homogeneous of degree 
$-1$, i.e. an element of $\breve{H}^0(\CP^1,\mc{O}(-1))$, 
so from the $k<0$ case in (\ref{H0}) we see that it must be zero: $\Box h_i(x,\l)=0$. 
Therefore we get a massless RH Dirac field on $\M$, $\c^{AA'}\phi_{A'}=0$. 

For $h=1$, i.e. for RH Maxwell fields, the procedure is similar to the Dirac 
case except that, as mentioned, we must now use potentials. Equation (\ref{gf}) 
defines an element of $\breve{H}^0(\CP^1,\mc{O}(1))$. 
From the $k>0$ case in (\ref{H0}) we deduce that its dependence in 
$\l_{A}$ must be polynomial, so we get
\begin{equation}\label{RHvp}
 \l^{A}\c_{AA'}h_i(x,\l)\equiv \l^{A}A_{AA'}(x),
\end{equation}
introducing in this way a covector field $A_{a}$ on the spacetime. Operating on 
(\ref{RHvp}) with $\l^{B}\c_{B}{}^{A'}$, on the LHS we get 
$\l^{B}\l^{A}\c_{B}{}^{A'}\c_{A'A}h_i=0$ (since $\c_{(B}{}^{A'}\c_{A)A'}h_i=0$), 
thus on the RHS we have $\c_{A'(A}A_{B)}{}^{A'}=0$. 
Now, the 2-form $F_{ab}:=2\c_{[a}A_{b]}$ satisfies $\c_{[b}F_{cd]}=0$, so multiplying by 
$\e^{abcd}$ we get $\c_{b}{}^{*}F^{ab}=0$. But the spinor decomposition of $F_{ab}$
is $F_{ab}=\psi_{AB}\e_{A'B'}+\phi_{A'B'}\e_{AB}$ with $\psi_{AB}=\c_{A'(A}A_{B)}{}^{A'}$
and $\phi_{A'B'}=\c_{A(A'}A_{B')}{}^{A}$, so since $\psi_{AB}=0$, $F_{ab}$ is SD:
${}^{*}F_{ab}=iF_{ab}$, therefore $\c_{b}F^{ab}=0$ or, equivalently, $\c^{AA'}\phi_{A'B'}=0$,
i.e. we get a RH Maxwell field on $\M$.

\smallskip
For $h>1$ the existence of RH fields is constrained by the well-known Buchdahl conditions 
involving the SD curvature.
For $h<0$, say $n=-2h>0$, there are no constraints since by assumption the spacetime is SD, 
namely $\Psi_{ABCD}\equiv0$.
In this case, to extract the LH fields, we consider again an element $f_{ij}\in\breve{H}^1(\CP^1,\mc{O}(-n-2))$ 
as a function on the spin bundle that is homogeneous in $\l_A$ of degree $-n-2$ and 
satisfies (\ref{twf}). Now, the field 
\begin{equation}\label{TLH}
 \Phi_{ijA...L}(x,\l):=\l_A...\l_L f_{ij}(x,\l),
\end{equation}
with $n+1$ factors of $\l_A$, satisfies $\c^{AA'}\Phi_{ijA...L}(x,\l)=0$ by virtue of (\ref{twf}).
Furthermore it is homogeneous of degree $-1$ in $\l_A$, so it can be regarded as a (spinor-valued)
element of $\breve{H}^1(\CP^1,\mc{O}(-1))$.
By the $k=-1$ case in (\ref{H1}), this group is trivial so (\ref{TLH}) must split as 
$\Phi_{ijA...L}(x,\l)=h_{iA...L}(x,\l)-h_{jA...L}(x,\l)$, with $h_{iA...L}(x,\l)$ holomorphic on $V_i$ 
and $h_{jA...L}(x,\l)$ holomorphic on $V_j$.
Taking a derivative, we get $\c^{AA'}h_{iA...L}(x,\l)=\c^{AA'}h_{jA...L}(x,\l)$, but this defines a 
global function in $\CP^1$ that is homogeneous of degree $-1$, so it must be zero.
Similarly, contracting (\ref{TLH}) with $\l^L$, we get $h_{iA...KL}(x,\l)\l^{L}=h_{jA...KL}(x,\l)\l^{L}$, 
and this is a global function in $\CP^1$, homogeneous of degree $0$, so it does not depend on $\l_A$, 
therefore $\vp_{A...K}(x):=h_{iA...KL}(x,\l)\l^{L}$ is a field on the spacetime and 
satisfies the LH massless free field equations (\ref{LH}).
The procedure above is the cohomological version of the well-known contour integral formula 
of Penrose.

\smallskip
The examples considered above are just some well-known instances (the ones that we will invoke 
later on in this paper) of the powerful methods of twistor theory, that involve {\em linear} field equations.
Twistor methods have also been extremely useful in the study of {\em non-linear} differential equations.
For example, they have led to a one-to-one correspondence between solutions 
of the SD or ASD Yang-Mills equations and holomorphic vector bundles over twistor space 
that are trivial on each projective line; this is known as the Ward transform. 
The Non-linear graviton (referred to above) is another example, which establishes a one-to-one 
correspondence between 4-dimensional SD manifolds satisfying the vacuum Einstein equations, 
and twistor spaces with some additional structures.
We will not need these non-linear constructions in the present work.

\section{Two-dimensional twistor spaces}\label{sec-2DTM}

We will now study the geometry associated to the existence of a complex {\em 2-dimensional} 
(rather than {\em 3-dimensional}) moduli space of totally null 2-surfaces. 
Our main goal is to show that this twistor structure, which is present in, for example, 
all conformally Einstein, algebraically special spaces, gives a natural geometric structure 
to several constructions associated to the description of massless fields propagating in 
curved spacetimes, and is in particular closely related to the geometry of the Teukolsky equations
and black hole perturbation theory.

\smallskip
We recall that a totally null 2-surface $\Sigma$ on a complex spacetime $(\M,g_{ab})$ 
(already introduced in section \ref{sec-TS}) is 
a complex 2-surface such that, for any two vectors $u^a$, $v^a$ 
tangent to $\Sigma$ at a point $p\in\Sigma$, it holds $g_{ab}u^a v^b=0$. 
Note that this condition is conformally invariant (i.e. it remains true if we make the 
transformation $g_{ab} \to \Om^2 g_{ab}$),
thus a totally null 2-surface is actually associated to the conformal structure of the spacetime, 
so henceforth we assume that we are working on a conformal manifold $(\M,[g])$.
The tangent vectors to $\Sigma$ are of the form $\xi^A\m^{A'}$, where either $\xi^A$ is fixed 
and $\m^{A'}$ varies (in which case $\Sigma$ is called $\b$-surface), or $\xi^A$ varies and 
$\m^{A'}$ is fixed (in which case $\Sigma$ is an $\a$-surface). 
We will focus here on $\b$-surfaces. By Frobenius theorem,
the condition for $\Sigma$ to be indeed a 2-surface is equivalent to the statement that, 
given any two vectors $u^a=\xi^A\m^{A'}$, $v^a=\xi^A\n^{A'}$, 
tangent to $\Sigma$ at $p\in\Sigma$, their Lie bracket should be a linear combination of them, 
namely $[u,v]^a=a u^a+b v^a$ for some scalar fields $a,b$. 
In other words, we must have $[u,v]^a=\xi^A\zeta^{A'}$ for some $\zeta^{A'}$. Replacing 
the expressions for $u^a$ and $v^a$, in general one finds
\begin{equation*}
 [u,v]^a=(\m^{C'}\n_{C'})\xi^B\c_{B}{}^{A'}\xi^A+\xi^A\k^{A'},
\end{equation*}
with $\k^{A'}=\xi^B(\m^{B'}\c_{BB'}\n^{A'}-\n^{B'}\c_{BB'}\m^{A'})$, and $\c_{AA'}$ 
the Levi-Civita connection of an arbitrary metric in the conformal class.
Thus the condition $[u,v]^a=\xi^A\zeta^{A'}$ is satisfied for {\em any} $u^a$ and $v^a$
tangent to $\Sigma$ if and only if 
$\xi^B\c_{B}{}^{A'}\xi^A=\xi^A\pi^{A'}$ for some $\pi^{A'}$, or equivalently, 
if and only if $\xi^A$ satisfies
\begin{equation}\label{SFR}
 \xi^A\xi^B\c_{AA'}\xi_B=0.
\end{equation}
This is exactly the condition for the null congruence associated to $\xi^A$ to 
be geodesic and shear-free\footnote{Note that, consistently, equation (\ref{SFR}) is 
conformally invariant if $\xi^A$ has well-defined conformal weight.} (SFR from now on). 
We thus arrive at the following result of Penrose and Rindler \cite{Penrose2} 
(we rephrase it according to our context):

\begin{prop}[Proposition (7.3.18) in \cite{Penrose2}]\label{prop-PR}
 A (complexified) conformal structure admits a 2-complex dimensional moduli space of 
 totally null 2-surfaces if and only if it admits a shear-free null geodesic congruence.
\end{prop}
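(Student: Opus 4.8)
The plan is to reduce everything to $\b$-surfaces (the case of $\a$-surfaces being entirely analogous, with primed spinors) and to exploit the Frobenius computation carried out above. That computation shows that a projective spinor field $[\xi^A]$ defines an involutive $\b$-plane distribution on $\M$ if and only if it satisfies the shear-free condition (\ref{SFR}). The proposition then follows by identifying a $2$-complex-dimensional moduli space of $\b$-surfaces with the leaf space of the foliation determined by such a distribution.

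For the implication from a congruence to a moduli space, suppose $[\xi^A]$ is a shear-free null geodesic congruence, i.e. a projective spinor field satisfying (\ref{SFR}). By the analysis above the $2$-plane distribution spanned by the vectors $\xi^A\zeta^{A'}$ (with $\zeta^{A'}$ varying) is involutive, hence integrable by the Frobenius theorem, and $\M$ is foliated by $\b$-surfaces. Since $\dim_{\mbb{C}}\M=4$ and each leaf is $2$-dimensional, the leaf space is $2$-complex-dimensional; this is the required moduli space.

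For the converse, suppose $\M$ carries a $2$-complex-dimensional moduli space $\T$ of $\b$-surfaces. A dimension count ($2+2=4=\dim_{\mbb{C}}\M$) suggests that these surfaces foliate $\M$, so that through a generic point $x$ there passes a unique leaf $\Sigma_x$; its tangent plane is a $\b$-plane, determining a projective spinor $[\xi^A(x)]$ and hence a spinor field up to scale on $\M$. Because each $\Sigma_x$ is an integral surface of the associated $\b$-plane distribution, that distribution is involutive, so by the Frobenius result $\xi^A$ satisfies (\ref{SFR}) and is a shear-free null geodesic congruence.

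The hardest part will be the converse: one must verify that a $2$-parameter family of $\b$-surfaces genuinely foliates $\M$, so that the tangent field $[\xi^A]$ is single-valued and of the correct regularity. The dimension count alone does not exclude leaves meeting along lower-dimensional loci; here one should invoke uniqueness of integral manifolds to conclude that two $\b$-surfaces sharing a point and a common tangent $\b$-plane must coincide, which rules out transverse intersections and secures the foliation. One must also check that the resulting projective spinor field inherits holomorphicity from the complex structure on $\T$.
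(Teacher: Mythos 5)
Your proposal is correct and follows essentially the same route as the paper, which likewise derives the shear-free condition (\ref{SFR}) as the Frobenius integrability condition for the $\b$-plane distribution and then identifies the $2$-dimensional moduli space with the leaf space of the resulting foliation (deferring full details to Proposition (7.3.18) of Penrose--Rindler). Your added caveats about the converse (single-valuedness of the tangent $\b$-plane field and uniqueness of integral surfaces) are sensible and, if anything, slightly more careful than the paper's own presentation.
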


\noindent
Considering a spin frame $(\xi^A,\eta^A)$ (with $\xi_A\eta^A=1$) and using standard notation 
for GHP spin coefficients (see e.g. \cite[Eq. (4.5.21)]{Penrose1}), in an arbitrary spacetime we have
\begin{equation}
 \xi^A\c_{A}{}^{B'}\xi^{B}=\xi^B\pi^{B'}+\eta^B(\k\iota^{B'}-\sigma o^{B'})
\end{equation}
where $\pi^{B'}=\b o^{B'}-\e\iota^{B'}$ and $(o^{A'},\iota^{A'})$ is a primed spin frame.
We thus see that $\xi^{A}$ is an SFR if and only if the following conditions hold:
\begin{equation}\label{SFR2}
 \k=0=\sigma.
\end{equation}

The integrability conditions for (\ref{SFR}) are $\Psi_{ABCD}\xi^A\xi^B\xi^C\xi^D=0$. 
If we require this to hold for {\em any} spinor $\xi^A$ at any point of $\M$, then 
we must have $\Psi_{ABCD}\equiv 0$, i.e. the conformal structure must be SD. 
The resulting {\em three}-complex parameter family of $\b$-surfaces
is the (curved, projective, dual) twistor space $\PTc^*$ of the conformal structure $(\M,[g])$, 
that we introduced at the end of section \ref{sec-TS}.
Proposition \ref{prop-PR} tells us that the existence of a {\em two}-complex parameter family 
of $\b$-surfaces is equivalent to the existence of an SFR, which is a much weaker condition.
This is a 2D twistor space and we will denote it by $\T$.

If we assume that the condition $\Psi_{ABCD}\xi^A\xi^B\xi^C\xi^D=0$ is valid for 
{\em a particular} spinor field $\xi^A$, this means
that $\xi^A$ must be a principal null direction (PND) of the ASD Weyl spinor.
Eventually we will also require the stronger condition $\Psi_{ABCD}\xi^B\xi^C\xi^D=0$, namely, 
that $\xi^A$ be a two-fold PND of $\Psi_{ABCD}$.
By the Goldberg-Sachs theorem, this is automatically satisfied in all conformal structures 
with an SFR that admit an Einstein metric.

\subsection{Structures on the conformal spinor bundles}\label{sec-csb}

From proposition \ref{prop-PR}, the existence of a 2D twistor space $\T$ singles out a spinor 
field $\xi^A$ in the (conformal) spacetime. We will show that a choice of a preferred spinor defines 
natural connections on spinor and tensor bundles in the conformal structure\footnote{Our main 
reference for concepts and definitions regarding conformal geometry is \cite{VK16}.}. 
This is independently of $\xi^A$ being or not an SFR; the SFR condition becomes relevant 
when studying additional properties of the associated connection such as its curvature.

\smallskip
As preliminaries, consider a complexified spacetime and denote by $(\M,[g])$ its conformal structure. 
The set of all frames $\{e_{\bf a}\}$ 
(${\bf a}=0,...,3$) such that $g(e_{\bf a}, e_{\bf b})=\Om^2 \eta_{\bf ab}$, 
with $g\in[g]$, $\Om\in\mbb{R}^{+}$ and $\eta_{\bf ab}={\rm diag}(1,-1,-1,-1)$, 
gives a principal fibre bundle with structure group ${\rm SO}(4,\mbb{C})\times\mbb{R}^{+}$.
The associated spin structure\footnote{Recall that the spin structure of a 
conformal manifold is a well-defined concept, see e.g. \cite[Section 5.6]{Penrose1}
and also Note 8 to Chapter 9 in \cite{Mason96}.} 
is denoted by $P_{\rm Spin}$,
and its structure group is $G=\SL_{L}\times\SL_{R}\times\mbb{R}^{+}$, where the two
factors of $\SL$ account for `left' and `right' rotations (recall (\ref{so4c})), and
the group $\mbb{R}^{+}$ corresponds to conformal transformations of the metric. 
If $\ve^A_{\bf A}=\{\ve^A_0,\ve^A_1\}$ and $\ve^{A'}_{\bf A'}=\{\ve^{A'}_{0'},\ve^{A'}_{1'}\}$
are unprimed and primed spin frames respectively, we choose their conformal 
behavior as $\wh\ve^A_0=\Om^{w_0}\ve^A_0$, $\wh\ve^A_1=\Om^{w_1}\ve^A_1$, 
and $\wh\ve^{A'}_{0'}=\Om^{w_0}\ve^{A'}_{0'}$, $\wh\ve^{A'}_{1'}=\Om^{w_1}\ve^{A'}_{1'}$, 
with $w_0+w_1+1=0$ (so that $\wh{\e}^{AB}=\Om^{-1}\e^{AB}$ and $\wh{\e}^{A'B'}=\Om^{-1}\e^{A'B'}$).
Considering the representation of $\mbb{R}^{+}$ in $\mbb{C}^2$ given by 
$\Om\mapsto{\rm diag}(\Om^{w_0}, \Om^{w_1})$, this means that the group $G$ acts on a 
spin frame $\ve_{\bf A}^A$ as $\ve_{\bf A}^A\mapsto C_{\bf A}{}^{\bf B}\ve_{\bf A}^B$, where 
$C_{\bf A}{}^{\bf B}$ is the product between a matrix $S$ of $\SL_L$ and ${\rm diag}(\Om^{w_0}, \Om^{w_1})$;
similarly for the primed spin frame $\ve_{\bf A'}^{A'}$.

Consider the vector space $V^{k,k'}_{l,l'}=(\mbb{C}^2)^{\otimes k}\otimes(\bar{\mbb{C}}^2)^{\otimes k'}
\otimes(\mbb{C}^{2*})^{\otimes l}\otimes(\bar{\mbb{C}}^{2*})^{\otimes l'}$, and the representation
$\varrho:\SL_{L}\times\SL_{R}\times\mbb{R}^{+}\to{\rm GL}(V^{k,k'}_{l,l'})$ defined by
\begin{equation}\label{rep}
 (\varrho(S,\tilde{S},\Om)\Psi)^{\bf A...A'...}_{\bf B...B'...}=
 C_{\bf P}{}^{\bf A}...\tilde{C}_{\bf P'}{}^{\bf A'}...(C^{-1})_{\bf B}{}^{\bf Q}...(\tilde{C}^{-1})_{\bf B'}{}^{\bf Q'}
 ...\Psi^{\bf P...P'...}_{\bf Q...Q'...}
\end{equation}
where $C_{\bf P}{}^{\bf A}$ is as before and $\Psi\in V^{k,k'}_{l,l'}$.
Then one can construct the associated vector bundles
\begin{equation}\label{sb}
 \mc{S}^{k,k'}_{l,l'}:=P_{\rm Spin}\times_{\varrho}V^{k,k'}_{l,l'},
\end{equation}
the sections of which are spinor fields on $\M$.
For example, the cases $\mc{S}^{A}\equiv\mc{S}^{1,0}_{0,0}$ and $\mc{S}^{A'}\equiv\mc{S}^{0,1}_{0,0}$ 
correspond to the unprimed and primed spin bundles respectively, and the case $\mc{S}^{AA'}\equiv\mc{S}^{1,1}_{0,0}$ 
can be identified with the tangent bundle of the manifold $\M$.
Using the abstract index notation, a section $\Psi\in\G(\mc{S}^{k,k'}_{l,l'})$ is
\begin{equation}
 \Psi^{A...A'...}_{B...B'...}= \Psi^{\bf A...A'...}_{\bf B...B'...}\ve^{A}_{\bf A}...
 \ve^{A'}_{\bf A'}...\ve^{\bf B}_{B}...\ve^{\bf B'}_{B'}.
\end{equation}
Considering also the standard construction of conformally weighted line bundles $\mc{E}[w]$ (whose sections 
are conformal scalar densities with weight $w$), and taking the tensor product 
$\mc{S}^{k,k'}_{l,l'}\otimes\mc{E}[w]=:\mc{S}^{k,k'}_{l,l'}[w]$,
the sheaf of sections $\G(\mc{S}^{k,k'}_{l,l'}[w])$ gives conformally weighted spinor fields.

\subsubsection{Conformal connections}

Let $P\to\M$ be a principal bundle over $\M$, with structure group $G$. A connection on $P$ 
is a decomposition of the tangent bundle of $P$ as a direct sum $TP=TV\oplus TH$ of `vertical' 
and `horizontal' bundles. 
The vertical bundle $TV$ is naturally defined and is isomorphic to the Lie algebra $\mf{g}$
of $G$. The horizontal bundle can be defined by using a connection 1-form, 
which is a 1-form $\w\in T^{*}P\otimes\mf{g}$ such that $TH=\ker \w$. 
Given an open neighbourhood $U\subset\M$, a {\em local} connection form is a
$\mf{g}$-valued 1-form $A$ over $U$. If $\sigma:U\to P$ is a local section over $U$, 
then there exists a connection form in $P$ such that $A=\sigma_{*}\w$; in what 
follows we will focus on local connection forms.
On the other hand, a connection on a vector bundle $E$ over $\M$ (which we also refer to as 
a covariant derivative) is essentially a linear map $\G(E)\to\G(E\otimes T^*\M)$ that satisfies 
the Leibniz rule.
Given a representation $\varrho:G\to{\rm GL}(V)$ of $G$ on a vector space $V$, we can construct 
associated vector bundles as $E=P\times_{\varrho}V$.
A natural way to get a connection on $E$ is to use the connection 1-form of $P$ or, 
rather, the local connection $A$. More precisely, if $\varrho'$ is the representation of 
the Lie algebra $\mf{g}$ associated to $\varrho$, then one can show (see e.g. \cite[Chapter 10]{Nak03})
that the connection induced on $E$ is 
\begin{equation}\label{IC}
\p_a+\varrho'(A_a).
\end{equation}

For a fixed spacetime, a trivial example of this construction is to take $P={\rm SO}\M$ 
(the orthonormal frame bundle) and the natural representation of ${\rm SO}(4,\mbb{C})$ in $\mbb{C}^4$,
then we can view the tangent bundle as $T\M\cong{\rm SO}\M\times_{{\rm SO}(4,\mbb{C})}\mbb{C}^4$.
The local connection 1-form in ${\rm SO}\M$ is the spin connection $\varpi_a$, thus the 
Levi-Civita connection $\c_a$ on $T\M$ can be viewed as induced from $\varpi_a$ in the 
manner (\ref{IC}), and the construction generalizes easily to tensor bundles over $\M$.
Of course, for tensor fields this is just a sophisticated way of describing their covariant 
derivative, but, as is well-known, the construction is essential when dealing with 
spinors (or more generally with gauge theories), since the only sensible way of defining
spinor fields is via associated bundles such as (\ref{sb}), and similarly for fields 
with internal degrees of freedom. 
This will be the approach that we use here for inducing natural connections on bundles over $\M$
from the 2D twistor space $\T$.

Now, if instead of a fixed spacetime we have the conformal structure $(\M,[g])$,
then a sensible analog of the Levi-Civita connection is a Weyl 
connection, which is a pair $(\slashed{\c}{}_{a},\mf{f}_a)$ consisting of a torsion-free connection 
$\slashed{\c}{}_{a}$ and a 1-form $\mf{f}_a$ such that for any representative $g_{ab}$ 
of the conformal class, it holds $\slashed{\c}{}_a g_{bc}=-2\mf{f}_a g_{bc}$, where $\mf{f}_a$ 
transforms under change of conformal representative (i.e. $g_{ab}\to\Om^2g_{ab}$) as 
$\mf{f}_a\mapsto \mf{f}_a-\U_a$, with $\U_a=\Om^{-1}\c_a\Om$. 
For a spinor field $\Psi\in\G(\mc{S}^{k,k'}_{l,l'})$, the relation 
between $\slashed{\c}{}_{a}$ and a Levi-Civita connection $\c_a$ is given by
\begin{align}
\nonumber \slashed{\c}{}_{a}\Psi^{B...B'...}_{C...C'...} =\;& \c_{a}\Psi^{B...B'...}_{C...C'...}
 +\e_{A}{}^{B}\mf{f}_{A'E}\Psi^{E...B'...}_{C...C'...}+...+\e_{A'}{}^{B'}\mf{f}_{AE'}\Psi^{B...E'...}_{C...C'...}+...\\
 & -\mf{f}_{A'C}\Psi^{B...B'...}_{A...C'...}-...-\mf{f}_{AC'}\Psi^{B...B'...}_{C...A'...}-.... \label{weylc}
\end{align}
More generally, for spinor fields with non-trivial conformal weight, this does not give a connection on 
$\mc{S}^{k,k'}_{l,l'}[w]$ since (\ref{weylc}) does not transform covariantly under 
conformal transformations. Instead, the appropriate connection is now
\begin{equation}
 \slashed{\c}{}_{a}\Psi^{B...B'...}_{C...C'...}+w\mf{f}_a\Psi^{B...B'...}_{C...C'...}.
\end{equation}

The problem now is that, unlike the Levi-Civita connection, Weyl connections are in 
principle not unique. There are some situations however where a preferred Weyl connection 
is singled out by particular properties of the system under consideration. 
This is for example the case when studying conformal geodesics, see e.g. \cite[Section 5.5]{VK16}.
Another example occurs in a conformal almost-Hermitian manifold, namely in a conformal 
structure that is also equipped with a compatible almost-complex structure $J$, which is a 
tensor field $J_{a}{}^{b}$ such that $J_{a}{}^{c}J_{c}{}^{b}=-\d_{a}{}^{b}$ and 
$J_{a}{}^{c}J_{b}{}^{d}g_{cd}=g_{ab}$ for any $g_{ab}$ in the conformal class, 
see \cite{Bailey3, Gover2013}.
In this situation, there exists a {\em unique} Weyl connection compatible with $J$, 
where `compatible' means that such Weyl connection, here denoted $(\slashed{\c}{}_a,\mf{f}_a)$, 
is determined uniquely by requiring that $\slashed{\c}{}_b J_{a}{}^{b}=0$ 
(see e.g. \cite[Section 4]{Gover2013}). 
In terms of the Levi-Civita connection $\c_a$ of a conformal representative $g_{ab}$,
$\mf{f}_a$ is given by $\mf{f}_a=-\frac{1}{2}J_{b}{}^{c}\c_{c}J_{a}{}^{b}$.
($\mf{f}_a$ is sometimes called the {\em Lee form}.)
In the present work we are dealing with complexified spacetimes, which, by definition, 
already have a complex structure;
but we will see below that the 2D twistor space $\T$ induces a {\em canonical} 
almost-complex structure (and this is also true for the {\em real} Lorentzian spacetime 
we started from). 
Consequently, we will obtain from $\T$ a canonical Weyl connection.

\subsubsection{Induced canonical complex structure}

From proposition \ref{prop-PR}, the 2D twistor space $\T$ defines a preferred 
spinor field $\xi^A$ in the spin bundle $\mc{S}^A$. 
We choose $\xi^A$ as an element of a spin frame, $\xi^A\equiv\ve^A_{0}$. 
Let $\eta^A$ be any other spinor field such that $\e_{AB}\xi^A\eta^B=1$ for any choice 
of conformal spin metric $\e_{AB}$; thus $(\xi^A,\eta^A)$ is a spin frame, the conformal 
weights of $\xi^A$ and $\eta^A$ being, respectively, $w_0$ and $w_1$, with $w_0+w_1+1=0$.
Since $\T$ determines $\xi^A$ only up to multiples, we have the freedom 
$\xi^A\to\l\xi^A$, with $\l$ a complex number different from zero. In turn, for $\eta^A$
we have the freedom $\eta^A\to \l^{-1}\eta^A+b\xi^A$, where $b$ is any complex number.
This means that the gauge group $\SL_{L}$ is reduced to $\mbb{C}^{\times}\times\mbb{C}^{+}$, 
where $\mbb{C}^{\times/+}$ is the multiplicative/additive group of complex 
numbers\footnote{Note that the spin group $\SL$ can be decomposed as 
$\SL\cong\mbb{C}^{\times}\times\mbb{C}^{+}\times\mbb{C}^{+}$, where $\mbb{C}^{\times}$ is 
the `GHP part' and the two factors of $\mbb{C}^{+}$ correspond to null rotations around the spinors 
of the frame.}.
Now, for any $x\in\M$, consider the linear operator $J:T_x\M\to T_x\M$ given by
\begin{equation}\label{J}
 v^a \mapsto J_{a}{}^{b}v^a:=i(\xi_{A}\eta^B+\eta_A\xi^B)\e_{A'}{}^{B'}v^{AA'}.
\end{equation}
Then it is straightforward to show that $J_{a}{}^{c}J_{c}{}^{b}=-\d_{a}^{b}$ and 
$J_{a}{}^{c}J_{b}{}^{d}g_{cd}=g_{ab}$ (with $g_{ab}=\e_{AB}\e_{A'B'}$), 
so (\ref{J}) equips $T_x\M$ with an almost-complex structure compatible 
with the conformal metric\footnote{Note that (\ref{J}) is a {\em complex} map, whereas 
the usual notion of an almost-complex structure requires it to be real. However, as shown 
in Theorem VIII.3 in \cite{Fla76}, a Lorentzian manifold (which is ultimately the most interesting 
case for our purposes) cannot admit a (real) almost-Hermitian 
structure, so we are forced to consider this complex-valued almost-Hermitian structure 
(in \cite{Fla76} this is referred to as a `modified' Hermitian structure). We will give an interpretation 
of (\ref{J}) in section \ref{sec-hvb} below.}.
(We note that a complex structure formally analogous to (\ref{J}) is used in \cite[Section 9.1]{Ward}
for the construction of the twistor space of a Riemannian ---i.e. positive definite--- 4-manifold,
where the spinor $\eta^A$ is obtained via an antiholomorphic involution applied to $\xi^A$; 
see equation (9.1.20) in that reference.) 

Of course, the map (\ref{J}) depends on a choice of $\eta^A$, with $\xi_A\eta^A=1$. 
Suppose an arbitrary choice of such an $\eta^A$ is made.
Since the null direction associated to $\eta^A$ is not fixed by the geometry, in principle we could 
change $\eta^A$ to $\eta^A+b\xi^A$. But the map (\ref{J}) then transforms to
$J_{a}{}^{b}+2i b\xi_A\xi^B\e_{A'}{}^{B'}$, which, if $b\neq0$,
depends explicitly on the choice of a representative from the projective class $[\xi^A]$.
Therefore, if we want the complex structure (\ref{J}) to depend only on the projective class 
of $\xi^A$, then we have to set $b=0$, which means that the gauge group 
$\mbb{C}^{\times}\times\mbb{C}^{+}$ is further reduced to $\mbb{C}^{\times}$.
(In other words, once we have arbitrarily chosen an $\eta^A$ with $\xi_A\eta^A=1$, 
the requirement that (\ref{J}) should depend only on $[\xi^A]$
does not allow us to make the transformation $\eta^A\to \eta^A+b\xi^A$.)

\begin{remark}
At this point, the fixing of the almost-complex structure is required in order to get a canonical Weyl connection. 
But (\ref{J}) and the structures derived from it are actually interesting on its own; 
we will see more about this in section \ref{sec-hvb} below.
\end{remark}

Now, fixing $J$ has two effects: on the one hand, it reduces the $\SL_L$ part of the gauge group 
to the GHP group $\mbb{C}^{\times}$, and on the other hand, determines a canonical 
Weyl connection $(\slashed{\c}{}_a,\mf{f}_a)$, namely the one compatible with $J$. 
Recalling the expression for the Lee form $\mf{f}_a=-\frac{1}{2}J_{b}{}^{c}\c_{c}J_{a}{}^{b}$, 
in terms of spin coefficients we have
\begin{equation}\label{f}
 \mf{f}_a=\rho n_a+\rho'\ell_a-\t\tilde{m}_a-\t' m_a,
\end{equation}
where we have chosen an arbitrary primed spin frame $\ve^{A'}_{\bf A'}=(o^{A'},\iota^{A'})$ for the primed 
spin bundle $\mc{S}^{A'}$ and defined the associated (complex) null tetrad in the usual way, i.e.
\begin{equation}\label{nt}
 \ell^a=\xi^Ao^{A'}, \quad n^a=\eta^{A}\iota^{A'}, \quad m^a=\xi^A\iota^{A'}, \quad \tilde{m}^a=\eta^Ao^{A'}.
\end{equation}

\subsubsection{The connection on spinor bundles induced from $\T$}

We have just seen that the canonical complex structure (\ref{J}) determines 
a preferred Weyl connection for the conformal manifold.
As mentioned, the fixing of the complex structure reduces $\SL_L$ to $\mbb{C}^{\times}$, 
which gives a subbundle $Q$ of $P_{\rm Spin}$
with structure group $\mbb{C}^{\times}\times\SL_R\times\mbb{R}^{+}$. (Recall that here 
$\mbb{R}^{+}$ is the multiplicative group of positive real numbers.)
From now on we choose the conformal weights for the spin frame $(\xi^A,\eta^A)$ as 
\begin{equation}\label{w0w1}
 w_0=0, \qquad w_1=-1.
\end{equation}

The principal bundle $Q$ inherits a connection from this reduction, which, since the Weyl 
connection is complex, will be valued in the 
complexified Lie algebra $\mf{g}_o:=(\mbb{C}\oplus\mf{sl}(2,\mbb{C})_R\oplus\mbb{R})\otimes\mbb{C}$.
This connection is found by looking at what parts of the full connection do not transform 
covariantly under the reduced structure group. 
A calculation similar to the one performed in \cite[Section 2.4]{Araneda2018}
shows that this connection 
is given by $\psi_a=(\w_a+B_a,\slashed{\w}{}_{a {\bf B'}}{}^{\bf C'},\mf{f}_a)$, where
$\slashed{\w}{}_{a {\bf B'}}{}^{\bf C'}=\ve^{\bf C'}_{B}\slashed{\c}{}_{a}\ve^{B}_{\bf B'}$ 
(with $\ve^{\bf A'}_{A}$ the frame dual to $\ve^{A}_{\bf A'}$) and
\begin{equation}
 \w_a:=-\e n_a+\e'\ell_a+\b\tilde{m}_a-\b' m_a, \qquad B_a:=-\rho n_a+\t\tilde{m}_a.
\end{equation}
$\w_a$ is the usual GHP connection form, and the 1-form $B_a$ was originally considered in \cite{Aks11} 
(for a choice of conformal weights different to (\ref{w0w1}) $B_a$ has to be modified, for details 
see \cite{Araneda2018}).

Now consider a section $\Psi\in\G(\mc{S}^{k,0}_{l,0})$, and project its indices on the frame 
$(\xi^A,\eta^A)$ and its dual, so that one gets a bunch of components.
A generic component $\psi$ is a complex scalar field that, under the allowed transformations of frame, 
changes according to a representation $\varrho_{p,w}$ of $\mbb{C}^{\times}\times\SL_R\times\mbb{R}^{+}$ 
on $\mbb{C}$ given by
\begin{equation}\label{rep2}
 \varrho_{p,w}(z,\tilde{S},\Om)\psi=z^p \Om^w \psi
\end{equation}
for some $p\in\mbb{Z}$. The scalar $\psi$ can then be regarded as a section of the complex line bundle
\begin{equation}\label{ws}
 \mc{O}(p)[w]:=Q\times_{\varrho_{p,w}}\mbb{C}.
\end{equation}
(In the language of the usual GHP formalism, sections of (\ref{ws}) could be thought of as 
`type $\{p,0\}$ quantities' with conformal weight $w$.)
Note that, if $\la \xi^A \ra$ is the line bundle whose fibre over $x\in\M$ is the set of spinors at $x$
proportional to $\xi^A$, we could also think of sections of (\ref{ws}) as complex-valued functions 
on $\la\xi^A\ra$ (namely $\psi:\la\xi^A\ra\to \mbb{C}$) that are homogeneous in $\xi^A$.

The connection on (\ref{ws}) is induced from the connection 1-form in $Q$ that we found before, 
and, using (\ref{IC}) and (\ref{rep2}), it is given by
\begin{equation}
 (\p_a+w\mf{f}_a+p(\w_a+B_a))\psi.
\end{equation}
More generally, for a section $\Psi\in\G(\mc{S}^{m,k'}_{n,l'})$, if we project an arbitrary number of 
its unprimed indices in the frame $(\xi^A,\eta^A)$, we get a mixed object that can be considered as 
a section of the product bundle $\mc{S}^{k,k'}_{l,l'}(p)[w]:=\mc{S}^{k,k'}_{l,l'}\otimes\mc{O}(p)[w]$
(for which we will also use the notation $\mc{S}^{A...A'...}_{B...B'...}(p)[w])$.
The connection on this structure is the product between the connections on the factors, so 
after all this discussion we finally get to:
\begin{lem}\label{lem-IC}
The 2D twistor space $\T$ from proposition \ref{prop-PR} induces a natural connection on the 
spinor bundles $\mc{S}^{k,k'}_{l,l'}(p)[w]$, given by
\begin{equation}\label{C}
\C_{a}\Psi^{B...B'...}_{C...C'...} := (\slashed{\c}{}_a+w\mf{f}_a+p(\w_a+B_a))\Psi^{B...B'...}_{C...C'...},
\end{equation}
where $\Psi^{B...B'...}_{C...C'...}\in\G(\mc{S}^{k,k'}_{l,l'}(p)[w])$.
\end{lem}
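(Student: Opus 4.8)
The plan is to assemble the connection \eqref{C} piece by piece from the constructions already established, and to verify that the result is genuinely a connection on the advertised bundle $\mc{S}^{k,k'}_{l,l'}(p)[w]$. The statement is essentially a synthesis of three facts proved or recalled earlier in the section: (a) the canonical complex structure \eqref{J} singles out a unique compatible Weyl connection $(\slashed{\c}{}_a,\mf{f}_a)$; (b) the reduction $\SL_L\to\mbb{C}^{\times}$ forced by fixing $J$ produces a principal subbundle $Q\subset P_{\rm Spin}$ with structure group $\mbb{C}^{\times}\times\SL_R\times\mbb{R}^{+}$, carrying an induced connection $\psi_a=(\w_a+B_a,\slashed{\w}{}_{a{\bf B'}}{}^{\bf C'},\mf{f}_a)$; and (c) the general principle \eqref{IC} that a representation of the structure group, together with the local connection form of a principal bundle, induces a covariant derivative on the associated vector bundle. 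My strategy is to feed the representation data of $\mc{S}^{k,k'}_{l,l'}(p)[w]$ into \eqref{IC} and read off \eqref{C}.

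First I would identify the relevant representation of the reduced group $G_o:=\mbb{C}^{\times}\times\SL_R\times\mbb{R}^{+}$ on the fibre of $\mc{S}^{k,k'}_{l,l'}(p)[w]$. By construction this fibre is $V^{k,k'}_{l,l'}\otimes\mbb{C}$, where the extra factor carries the representation $\varrho_{p,w}$ of \eqref{rep2}, and the representation is the tensor product of: the $\mbb{C}^{\times}$-action $z^p$ coming from the projection of unprimed indices onto the GHP-scaling frame $(\xi^A,\eta^A)$; the natural $\SL_R$-action on the primed indices; and the conformal weight action $\Om^w$. Differentiating this product representation gives $\varrho'_{p,w}$ as a sum of the Lie-algebra actions on each tensor factor, so that \eqref{IC} specializes to the operator $\p_a$ plus the sum of the three contributions $p(\w_a+B_a)$ (from the $\mbb{C}^{\times}$-part), $\slashed{\w}{}_{a{\bf B'}}{}^{\bf C'}$ acting on primed indices, and $w\mf{f}_a$ (from the $\mbb{R}^{+}$-part). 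Assembling the spin-connection pieces on the primed indices together with the covariant-derivative normalization yields precisely $\slashed{\c}{}_a$ acting on the full spinor, so the total operator is $\slashed{\c}{}_a+w\mf{f}_a+p(\w_a+B_a)$, which is \eqref{C}.

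The substantive verification is that \eqref{C} is well-defined on $\mc{S}^{k,k'}_{l,l'}(p)[w]$, i.e. that it transforms covariantly under the residual gauge freedom and under change of conformal representative $g_{ab}\to\Om^2 g_{ab}$. Here I would lean on the earlier analysis: the Weyl-connection piece $\slashed{\c}{}_a+w\mf{f}_a$ is by definition the correct connection on the conformally weighted bundle $\mc{S}^{k,k'}_{l,l'}[w]$ (this was exactly the point of introducing the Lee form $\mf{f}_a$ and the weighted covariant derivative), while the piece $p(\w_a+B_a)$ is the connection form of the $\mbb{C}^{\times}$-reduction $Q$, hence transforms as a connection on $\mc{O}(p)$ under the GHP scaling $\xi^A\to\l\xi^A$. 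Since a tensor product of connections on factor bundles is a connection on the product bundle, the Leibniz rule and covariance follow from the corresponding properties of the factors. The main obstacle, and the step I would actually carry out in detail, is the identification in the previous paragraph: confirming that the $\mbb{C}^{\times}$-part of the induced connection on $Q$ is $\w_a+B_a$ rather than merely the bare GHP form $\w_a$. This is the content of ``looking at what parts of the full connection do not transform covariantly under the reduced structure group,'' and it hinges on the choice of conformal weights \eqref{w0w1} together with the specific form of $B_a=-\rho n_a+\t\tilde{m}_a$; I would reproduce the computation of \cite[Section 2.4]{Araneda2018} adapted to \eqref{w0w1}, checking that the GHP weight $p$ couples precisely to $\w_a+B_a$ and no spurious terms survive.
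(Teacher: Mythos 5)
Your proposal is correct and follows essentially the same route as the paper: the paper's "proof" of Lemma \ref{lem-IC} is precisely the preceding construction — the reduced principal bundle $Q$ with its connection $\psi_a=(\w_a+B_a,\slashed{\w}{}_{a{\bf B'}}{}^{\bf C'},\mf{f}_a)$, the representation $\varrho_{p,w}$ of \eqref{rep2}, the induced-connection formula \eqref{IC}, and the tensor product of connections on $\mc{S}^{k,k'}_{l,l'}\otimes\mc{O}(p)[w]$. You also correctly isolate the one genuinely computational step (that the $\mbb{C}^{\times}$-component of the reduced connection is $\w_a+B_a$ rather than $\w_a$ alone, for the weights \eqref{w0w1}), which the paper likewise defers to the calculation in \cite[Section 2.4]{Araneda2018}.
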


Summarizing, we have shown that the existence of a 2D twistor space defines in a 
natural way a preferred connection (\ref{C}) for the spinor bundles of the conformal structure.
The derivation is actually valid even if $\xi^A$ is not an SFR; the point is that the 2D twistor 
space singles out the (projective) spinor $\xi^A$. 
We can already see that the SFR condition is quite special, by noting that, 
since $\xi^A\in\G(\mc{S}^{A}(1)[0])$,
in terms of spin coefficients we have (see \cite[Eq. (2.53)]{Araneda2018}) 
\begin{equation}\label{derxi}
 \C_a\xi^B=(-\k n_a+\sigma\tilde{m}_a)\eta^B.
\end{equation}
The SFR condition on $\xi^A$ is equivalent to (\ref{SFR2}), so $\xi^A$ is in this case
annihilated by the naturally induced connection.

\subsection{Fibre bundles over the 2D twistor space $\T$}\label{sec-FB}

In section \ref{sec-PT} we have seen that the Penrose transform associates massless fields in a SD 
background spacetime with sheaf cohomology classes over twistor space. These cohomology classes
are sections of certain line bundles over $\PTc^*$ (modulo coboundary equivalence), 
that can be thought of as functions on the spin bundle that are constant on $\b$-surfaces 
(see discussion around (\ref{twf0})).
In order to study whether a similar mechanism can be constructed in our present context, in which
we do not have the full twistor space $\PTc^{*}$ but just the 2D twistor space $\T$, 
we have to construct bundles over $\T$. 
Recall that a single point $W\in\T$ corresponds to a whole 2-surface $\wt{W}$ in $\M$, so, 
roughly speaking, the construction of a fibre over $W$ would require objects that are 
appropriately `constant' over $\wt{W}$ (as in the case with a full twistor space).
This constancy will be expressed in terms of the connection $\C_a$ constructed before, 
and naturally it is constrained by integrability conditions
involving the curvature of $\C_a$, therefore we will first study this curvature.

\subsubsection{Curvature of $\C_a$}\label{sec-curv}

As usual, the curvature of the connection $\C_a$ is defined by the commutator $[\C_a,\C_b]$. 
This splits into its SD and ASD parts according to
\begin{equation}\label{comm}
 [\C_a,\C_b]=\e_{AB}\Box^{\C}_{A'B'}+\e_{A'B'}\Box^{\C}_{AB}
\end{equation}
where $\Box^{\C}_{A'B'}:=\C_{A(A'}\C_{B')}{}^{A}$ and $\Box^{\C}_{AB}:=\C_{A'(A}\C_{B)}{}^{A'}$. 
The irreducible decomposition of the second order operator $\C_{A'A}\C_{B}{}^{A'}$ is
\begin{equation}\label{2op}
 \C_{A'A}\C_{B}{}^{A'}=\tfrac{1}{2}\e_{AB}\Box^{\C}+\Box^{\C}_{AB}, \qquad \Box^{\C}:=g^{ab}\C_a\C_b.
\end{equation}
(Similarly for $\C_{A'A}\C_{B'}{}^{A}$.)
The ASD part of the curvature is $\Box^{\C}_{AB}$, and explicit expressions for it 
depend on the object it is acting on. We will focus on its action on sections of 
$\mc{O}(p)[w]$, $\mc{S}^{A'}(p)[w]$ and $\mc{S}^{A}(p)[w]$:

\begin{lem}
Let $f\in\G(\mc{O}(p)[w])$, $\m^{A'}\in\G(\mc{S}^{A'}(p)[w])$ and $\k^{A}\in\G(\mc{S}^{A}(p)[w])$. 
Suppose that $\xi^A$ is an SFR and a two-fold PND. Then
\begin{align}
 & \Box^{\C}_{AB}f= [ -(w\chi'+p\Psi_3) \xi_A\xi_B +3p\Psi_2 \xi_{(A}\eta_{B)}]f, \label{asdC} \\
 & \Box^{\C}_{AB}\m^{D'}= \Box_{AB}\m^{D'}+[ -(w\chi'+p\Psi_3) \xi_A\xi_B +3p\Psi_2 \xi_{(A}\eta_{B)}]\m^{D'}
 +F_{ABC'}{}^{D'}\m^{D'}, \label{asdC2}\\
 & \Box^{\C}_{AB}\k^{D}= \Box_{AB}\k^D+[ -(w\chi'+p\Psi_3) \xi_A\xi_B +3p\Psi_2 \xi_{(A}\eta_{B)}]\k^D
 +G_{ABC}{}^{D}\k^{D} \label{asdC3}
\end{align}
where we defined $\chi'=(\tho+2\rho-\tilde\rho)\k'-(\edt+2\t-\tilde\t')\sigma'+2\Psi_3$, 
$F_{ABC'}{}^{D'}=-\c_{(A}{}^{D'}\mf{f}_{B)C'}+\mf{f}_{(A}{}^{D'}\mf{f}_{B)C'}$, and 
$G_{ABC}{}^{D}=\e_{(B}{}^{D}[\c_{A)A'}f_{C}{}^{A'}+f_{A)A'}f_{C}{}^{A'}]$.
\end{lem}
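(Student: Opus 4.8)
The plan is to compute $\Box^{\C}_{AB}$ directly from $\Box^{\C}_{AB}=\C_{A'(A}\C_{B)}{}^{A'}$, exploiting the additive form of the connection (\ref{C}). Set $\mc{A}_a:=w\mf{f}_a+p(\w_a+B_a)$. Since $\mc{A}_a$ is valued in the abelian part $\mbb{C}\oplus\mbb{R}$ of $\mf{g}_o$ — it acts by scalar multiplication through the weights $p$ and $w$ — it commutes with itself, so the full curvature splits as
\begin{equation*}
 [\C_a,\C_b]=[\slashed{\c}{}_a,\slashed{\c}{}_b]+2\slashed{\c}{}_{[a}\mc{A}_{b]},
\end{equation*}
the mixed terms $\mc{A}_{[b}\slashed{\c}{}_{a]}$ being symmetric in $a,b$ and hence dropping out. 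Extracting the ASD ($\e_{A'B'}$) part according to (\ref{comm}) then presents $\Box^{\C}_{AB}$ as the sum of the ASD part of the Weyl-connection curvature and the ASD part $\slashed{\c}{}_{A'(A}\mc{A}_{B)}{}^{A'}$ of the abelian field strength.

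In the scalar case (\ref{asdC}) the Weyl curvature $[\slashed{\c}{}_a,\slashed{\c}{}_b]$ has no spinor index on which to act, so it annihilates a section of $\mc{O}(p)[w]$ and the entire result comes from the curl $\slashed{\c}{}_{A'(A}\mc{A}_{B)}{}^{A'}$. First I would evaluate $\slashed{\c}{}_{A'(A}\mf{f}_{B)}{}^{A'}$ from the spin-coefficient expression (\ref{f}) for the Lee form; imposing the SFR condition (\ref{SFR2}) ($\k=\sigma=0$) and the two-fold PND condition ($\Psi_0=\Psi_1=0$) collapses it to $-\chi'\xi_A\xi_B$, with $\chi'$ the displayed combination of GHP derivatives of $\k'$ and $\sigma'$. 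Next I would treat $\slashed{\c}{}_{A'(A}(\w+B)_{B)}{}^{A'}$: the curl of the GHP form $\w_a$ reproduces the usual GHP curvature (through the commutators $[\tho,\edt]$ and their primes, and the Weyl scalars), while adding the curl of $B_a=-\rho n_a+\t\tilde{m}_a$ from \cite{Aks11} shifts the coefficients so that, after using $\k=\sigma=0$ and $\Psi_0=\Psi_1=0$, one is left with $-\Psi_3\xi_A\xi_B+3\Psi_2\xi_{(A}\eta_{B)}$. Weighting the two pieces by $w$ and $p$ and adding gives (\ref{asdC}).

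For the spinor cases (\ref{asdC2}) and (\ref{asdC3}) the only new ingredient is the Weyl-connection curvature on a single spinor index. Using (\ref{weylc}) to write $\slashed{\c}{}_a=\c_a+(\text{Lee-form terms})$, I would expand $[\slashed{\c}{}_a,\slashed{\c}{}_b]$ acting on $\m^{D'}$ (resp. $\k^D$) and separate it into the Levi-Civita curvature $\Box_{AB}$ on that index plus the terms generated by the Lee form; collecting the latter reproduces precisely the tensors $F_{ABC'}{}^{D'}$ and $G_{ABC}{}^{D}$ of the statement. The abelian part $\mc{A}_a$ acts on $\m^{D'}$ and $\k^D$ by scalar multiplication exactly as it does on $f$, so it contributes the same weight-dependent bracket $[-(w\chi'+p\Psi_3)\xi_A\xi_B+3p\Psi_2\xi_{(A}\eta_{B)}]$ as in (\ref{asdC}). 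Assembling the three contributions yields (\ref{asdC2}) and (\ref{asdC3}).

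The main obstacle will be the GHP computation in the scalar case: pinning down the exact coefficients — the $3p\Psi_2$ multiplying $\xi_{(A}\eta_{B)}$ and the precise form of $\chi'$ — requires careful use of the GHP commutators and the Ricci/Bianchi identities, keeping strict track of the GHP weights of $\k'$, $\sigma'$, $\rho$, $\t$ and of the conformal weight carried by $\mf{f}_a$. Showing that the $\w_a$- and $B_a$-contributions combine into exactly $-\Psi_3\xi_A\xi_B+3\Psi_2\xi_{(A}\eta_{B)}$, rather than some other combination of $\xi_A\xi_B$ and $\xi_{(A}\eta_{B)}$, is the most delicate step, and the place where the two-fold PND hypothesis is genuinely needed.
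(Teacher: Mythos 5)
Your proposal is correct and follows essentially the same route as the paper: the paper likewise reduces $\Box^{\C}_{AB}$ to the symmetrized curl of the abelian 1-form $w\mf{f}_a+p(\w_a+B_a)$ (plus, for the spinor cases, the Levi-Civita curvature and Lee-form corrections $F$, $G$), evaluates the curls in GHP form for an arbitrary spacetime, and then uses $\k=\sigma=0=\Psi_0=\Psi_1$ to kill the extra $\eta_A\eta_B$ and $\xi_{(A}\eta_{B)}$ terms. The only cosmetic difference is that the paper writes the curls with the Levi-Civita connection rather than the Weyl connection, which is immaterial since both are torsion-free.
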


\begin{proof}
From the definition (\ref{comm}), we have
\begin{equation*}
 \Box^{\C}_{AB}f=[w\c_{A'(A}\mf{f}_{B)}{}^{A'}+p( \c_{A'(A}\w_{B)}{}^{A'}+\c_{A'(A}B_{B)}{}^{A'} \;)]f.
\end{equation*}
The calculation of the RHS is tedious but straightforward, it can be done using the GHP formalism.
For an arbitrary spacetime, we find
\begin{align*}
 \c_{A'(A}\mf{f}_{B)}{}^{A'} &= -\chi' \xi_A\xi_B+\chi\eta_A\eta_B \\
 \c_{A'(A}\w_{B)}{}^{A'}+\c_{A'(A}B_{B)}{}^{A'} &= -\Psi_3\xi_A\xi_B
 +(3\Psi_2-2\zeta)\xi_{(A}\eta_{B)}-(\chi+\Psi_1)\eta_A\eta_B
\end{align*}
where $\zeta=\sigma\sigma'-\k\k'$ and $\chi$ is the GHP prime of $\chi'$. If $\xi^A$ 
is an SFR and a two-fold PND, then $\k=\sigma=0=\Psi_0=\Psi_1$, which implies 
$\chi=0=\zeta$ and (\ref{asdC}) follows. The proof of (\ref{asdC2}) and (\ref{asdC3}) is similar.
\end{proof}

Identities (\ref{asdC}) and (\ref{asdC2}) will be very useful below when studying the integrability 
conditions for differential equations associated to the construction of bundles over $\T$.

\subsubsection{The connection on $\b$-surfaces}

Consider an arbitrary $\b$-surface $\wt{W}$.
By definition, any tangent vector to $\wt{W}$ is of the form $\xi^A\m^{A'}$, with 
$\xi^A$ fixed and $\m^{A'}$ variable, thus the tangent bundle of $\wt{W}$, denoted $T\wt{W}$, 
can be identified with the primed spin bundle (more precisely, with the 
restriction of it to the $\b$-surface $\wt{W}$). 
We can also be more general and consider spinor fields with non-trivial $p$- and $w$-weights, 
by tensoring the corresponding bundle with $\mc{O}(p)[w]$.
Now, we have seen that the natural connection on the tangent bundle $T\M$, induced 
from the 2D twistor space, is $\C_a$.
To find the natural connection on $T\wt{W}$, we note that, for arbitrary $X,Y\in T\wt{W}$, 
this connection must satisfy $\C_{X}Y=Z$ for some $Z\in T\wt{W}$. If $X^a=\xi^A\pi^{A'}$, 
$Y^a=\xi^A\m^{A'}$
and $Z^a=\xi^A\zeta^{A'}$, this is equivalent to $\xi^{A}\pi^{A'}\C_{AA'}(\xi^B\m^{B'})=\xi^B\zeta^{B'}$.
Noting that this must be valid for arbitrary $\pi^{A'}$, contracting with $\eta_B$, 
and recalling that the right hand side should be a linear operator on $\m^{A'}$ satisfying 
the Leibniz rule, we get $-\eta_B\xi^{A}\C_{AA'}(\xi^{B}\m^{B'})\equiv\tilde{\C}_{A'}\m^{B'}$, 
defining in this way the natural connection $\tilde{\C}_{A'}$ on $T\wt{W}$ (see \cite{Bailey2} 
for similar discussion). (The notation $\tilde{\C}_{A'}$ instead of $\C_{A'}$ is chosen to match the 
conventions in section \ref{sec-hvb} below, where this is interpreted in terms of holomorphic structures.)
Furthermore, we have seen that the fact that $\xi^A$ is associated to a $\b$-surface implies that 
$\C_a\xi^B=0$, therefore
\begin{equation}\label{cbs}
 \tilde{\C}_{A'}\m^{B'}\equiv\xi^A\C_{AA'}\m^{B'}.
\end{equation}
An interesting result concerning this connection is the following:
\begin{lem}\label{lem-Cbs}
Suppose $\xi^A$ is an SFR and a two-fold PND. Then the connection on $\b$-surfaces is flat:
\begin{equation}
 [\tilde{\C}_{A'},\tilde{\C}_{B'}]=0.
\end{equation}
\end{lem}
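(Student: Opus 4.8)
The plan is to compute the commutator $[\tilde{\C}_{A'},\tilde{\C}_{B'}]$ directly from its definition \eqref{cbs}, reducing it to the already-known ASD curvature operator $\Box^{\C}_{AB}$ whose action on spinor fields was computed in equation \eqref{asdC2}. First I would note that $[\tilde{\C}_{A'},\tilde{\C}_{B'}]$ acting on a primed-spinor field is antisymmetric in $A',B'$, and since the primed indices are two-valued, the result is completely determined by its trace part: $[\tilde{\C}_{A'},\tilde{\C}_{B'}] = \tfrac12\e_{A'B'}[\tilde{\C}_{C'},\tilde{\C}^{C'}]$. So it suffices to show $[\tilde{\C}_{A'},\tilde{\C}^{A'}]\m^{D'}=0$ for an arbitrary section $\m^{D'}$. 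Using \eqref{cbs}, I would expand
\begin{equation*}
 [\tilde{\C}_{A'},\tilde{\C}^{A'}]\m^{D'}
 = \xi^A\C_{AA'}(\xi^B\C_{B}{}^{A'}\m^{D'}) - \xi^B\C_{BA'}(\xi^A\C_{A}{}^{A'}\m^{D'}),
\end{equation*}
and then exploit the fact, recorded in \eqref{derxi} together with the SFR condition \eqref{SFR2}, that $\C_a\xi^B=0$ when $\xi^A$ is an SFR. This lets the $\xi$'s pass through the outer derivative, collapsing the expression into $\xi^A\xi^B$ contracted against $\C_{AA'}\C_B{}^{A'}-\C_{BA'}\C_A{}^{A'}$ acting on $\m^{D'}$.

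The key algebraic step is that this antisymmetrized double derivative is exactly (twice) the ASD curvature box: from \eqref{2op} and \eqref{comm} one has $\xi^A\xi^B\C_{A'(A}\C_{B)}{}^{A'}=\xi^A\xi^B\Box^{\C}_{AB}$, since the symmetric contraction with $\xi^A\xi^B$ picks out the $\Box^{\C}_{AB}$ piece and kills the trace $\e_{AB}$ term. Therefore the whole commutator reduces to computing $\xi^A\xi^B\Box^{\C}_{AB}\m^{D'}$. Now I would invoke \eqref{asdC2}: contracting the displayed formula with $\xi^A\xi^B$, the conformal-weight and $p$-weight terms both carry factors $\xi_A\xi_B$ and $\xi_{(A}\eta_{B)}$, so upon contraction with $\xi^A\xi^B$ they vanish identically (because $\xi^A\xi_A=0$). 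The remaining pieces are $\xi^A\xi^B\Box_{AB}\m^{D'}$ (the Levi-Civita ASD curvature) and $\xi^A\xi^B F_{ABC'}{}^{D'}\m^{D'}$.

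The main obstacle, and the real content of the lemma, is showing these two surviving terms also vanish. For the Levi-Civita term, $\xi^A\xi^B\Box_{AB}$ acting on a primed spinor is governed by the ASD Weyl spinor $\Psi_{ABCD}$; the standard Penrose–Rindler identity gives $\Box_{AB}\m^{D'}$ as a combination involving $\Psi_{ABCD}\m^{D'}$-type curvature, but since $\m^{D'}$ carries no unprimed indices, $\xi^A\xi^B\Box_{AB}\m^{D'}$ involves only the scalar curvature/cosmological piece contracted as $\xi^A\xi^B\e_{AB}\cdots$ or the Ricci-type term $\Phi$ and $\Lambda$, all of which collapse when hit with the symmetric $\xi^A\xi^B$ — here I would use that $\xi^A\xi^B\e_{AB}=0$ and that the two-fold PND condition $\Psi_{ABCD}\xi^B\xi^C\xi^D=0$ (guaranteed in the conformally Einstein case by the Goldberg–Sachs theorem, as noted after Proposition~\ref{prop-PR}) removes any Weyl contribution. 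For the $F$-term, I would substitute $F_{ABC'}{}^{D'}=-\c_{(A}{}^{D'}\mf{f}_{B)C'}+\mf{f}_{(A}{}^{D'}\mf{f}_{B)C'}$ from the preceding lemma and contract with $\xi^A\xi^B$; using the explicit spin-coefficient form \eqref{f} of the Lee form together with $\k=\sigma=0$, a short GHP computation should show $\xi^A\xi^B F_{ABC'}{}^{D'}=0$. Assembling these, every term on the right of the contracted \eqref{asdC2} vanishes, giving $\xi^A\xi^B\Box^{\C}_{AB}\m^{D'}=0$ and hence $[\tilde{\C}_{A'},\tilde{\C}_{B'}]\m^{D'}=0$, which is the claimed flatness.
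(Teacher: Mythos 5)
Your reduction is sound and matches the paper's: using $\C_a\xi^B=0$ (valid for an SFR by \eqref{derxi}) to pull the $\xi$'s through, and noting $\xi^A\xi^B\e_{AB}=0$ in the decomposition \eqref{comm}, one indeed gets $[\tilde{\C}_{A'},\tilde{\C}_{B'}]=\e_{A'B'}\xi^A\xi^B\Box^{\C}_{AB}$, and contracting \eqref{asdC2} with $\xi^A\xi^B$ correctly kills the weight terms, leaving $\xi^A\xi^B\Box_{AB}\m^{D'}+\xi^A\xi^B F_{ABC'}{}^{D'}\m^{C'}$.

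The gap is in how you dispose of these two surviving terms: you claim each vanishes separately, and neither claim is correct. The spinor Ricci identity for $\Box_{AB}$ acting on a \emph{primed} spinor is $\Box_{AB}\m^{D'}=\Phi_{ABC'}{}^{D'}\m^{C'}$ --- the mixed Ricci spinor, not a Weyl or pure-trace contribution --- so $\xi^A\xi^B\Box_{AB}\m^{D'}=\xi^A\xi^B\Phi_{ABC'}{}^{D'}\m^{C'}$, whose components are $\Phi_{00},\Phi_{01},\Phi_{02}$; these are generically nonzero since the lemma does not assume the Einstein equations (neither $\xi^A\xi^B\e_{AB}=0$ nor the two-fold PND condition touches this term). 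Likewise $\xi^A\xi^B F_{ABC'D'}$ does not vanish: the paper's GHP computation shows that in an arbitrary spacetime it contains the terms $-\Phi_{00}\iota_{C'}\iota_{D'}-\Phi_{02}o_{C'}o_{D'}+\Phi_{01}(\iota_{C'}o_{D'}+o_{C'}\iota_{D'})$ alongside terms proportional to $\k$, $\sigma$ and $\Psi_1$, so that imposing $\k=\sigma=0$ and $\Psi_0=\Psi_1=0$ yields $\xi^A\xi^B F_{ABC'D'}=-\xi^A\xi^B\Phi_{ABC'D'}$. The actual content of the lemma is precisely the \emph{cancellation} of the Lee-form curvature against the Ricci curvature along the $\b$-surface, not the separate vanishing of each piece; your argument happens to land on the right answer only because the two errors offset. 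To repair the proof you need the explicit GHP evaluation of $\xi^A\xi^B F_{ABC'D'}$ (via the Ricci identities, \cite[Eq.\ (4.12.32)]{Penrose1}) and the observation that it equals minus the $\Phi$-term coming from $\Box_{AB}$.
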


\begin{proof}
We first note that 
\begin{equation*}
 [\tilde{\C}_{A'},\tilde{\C}_{B'}] = \xi^{A}\C_{A'A}(\xi^{B}\C_{B'B})-\xi^{B}\C_{B'B}(\xi^{A}\C_{A'A})
 =\e_{A'B'}\xi^A\xi^B\Box^{\C}_{AB}.
\end{equation*}
Now let $\m^{A'}$ be a section of $\mc{S}^{A'}(p)[w]$. Using (\ref{asdC2}) and the standard expression 
for the usual curvature operator $\Box_{AB}$, we get
$\xi^A\xi^B\Box^{\C}_{AB}\m^{D'}=(\xi^A\xi^B\Phi_{ABC'}{}^{D'}+\xi^A\xi^B F_{ABC'}{}^{D'})\m^{C'}$.
A straightforward but tedious calculation using the Ricci identities (see \cite[Eq. (4.12.32)]{Penrose1}) 
shows that in an arbitrary spacetime, introducing a primed spin frame $(o_{A'},\iota_{A'})$,
one has
\begin{align*}
 \xi^A\xi^B F_{ABC'D'}=&-(\edt'\k+\sigma\tilde\sigma+\Phi_{00})\iota_{C'}\iota_{D'}
 -(\tho'\sigma+\k\tilde\k'+\Phi_{02})o_{C'}o_{D'}\\
 & +[(\tho'+\rho')\k+(\tilde\t-\t')\sigma+\Psi_1+\Phi_{01}]\iota_{C'}o_{D'}\\
  & +[(\edt'+\t')\sigma+(\tilde\rho'-\rho')\k-\Psi_1+\Phi_{01}]o_{C'}\iota_{D'}
\end{align*}
If $\xi^A$ is an SFR and a two-fold PND, then this reduces to $\xi^A\xi^B F_{ABC'D'}=-\xi^A\xi^B\Phi_{ABC'D'}$, thus 
the result follows.
\end{proof}

The flatness of the connection $\tilde{\C}_{A'}$ has a number of interesting consequences, on 
which we will now comment only briefly. (We will not pursue these matters further here, and leave 
a more detailed analysis for future works).

First, consider an arbitrary $\b$-surface $\wt{W}$, and
denote by $V^{p,w}_k$ the sheaf of totally antisymmetric sections of the spinor bundle 
$\mc{S}_{A'...K'}(p)[w]$ (restricted to $\wt{W}$) with $k$ indices (which of course is zero for $k\geq3$). 
For a section $\psi_{A'...K'}=\psi_{[A'...K']}$, define the exterior derivative 
$({\rm d}_{\C}\psi)_{A'B'...L'}:=(k+1)\tilde{\C}_{[A'}\psi_{B'...L']}$. Then, since $\tilde{\C}_{A'}$ is flat, we 
have ${\rm d}_{\C}^2=0$, thus {\em we get a twisted de Rham complex}:
\begin{equation}\label{tdR}
\begin{tikzcd}
    0 \arrow{r} & V^{p,w}_0 \arrow{r}{{\rm d}_{\C}} & V^{p+1,w}_1 \arrow{r}{{\rm d}_{\C}} & V^{p+2,w}_2 \arrow{r} & 0.
\end{tikzcd} 
\end{equation}
Furthermore, a twisted de Rham complex is {\em locally exact} 
(see e.g. \cite[Prop. 2]{Kha14}, its proof, and references therein), meaning that for every point $x$ 
and for every function $f$ defined on a neighbourhood $U\ni x$ such that ${\rm d}_{\C}f=0$, there exists 
a function $g$ defined on $V\ni x$ (with $V\subseteq U$) such that $f={\rm d}_{\C}g$.
Now, recall that the exactness of a sequence of sheaves is a local requirement since it is at the 
level of stalks (see \cite[Def. 2.5 in Ch. II]{Wells}). 
More precisely, given three sheaves $\mc{A}$, $\mc{B}$, $\mc{C}$ over a topological space $X$, 
and two morphisms $\mc{A}\xrightarrow{\phi}\mc{B}$ and $\mc{B}\xrightarrow{\psi}\mc{C}$,
the sequence $\mc{A}\xrightarrow{\phi}\mc{B}\xrightarrow{\psi}\mc{C}$ 
is exact at $\mc{B}$ if the induced sequence on stalks,
$\mc{A}_x\xrightarrow{\phi_x}\mc{B}_x\xrightarrow{\psi_x}\mc{C}_x$, is exact at $\mc{B}_x$, 
namely ${\rm im}(\phi_x)=\ker(\psi_x)$ for all $x\in X$. 
Then one says that $0\to\mc{A}\xrightarrow{\phi}\mc{B}\xrightarrow{\psi}\mc{C}\to0$ is a short exact 
sequence of sheaves if it is exact at $\mc{A}$, $\mc{B}$ and $\mc{C}$ (namely, $\phi_x$ is injective, 
$\psi_x$ is surjective, and ${\rm im}(\phi_x)=\ker(\psi_x)$ for all $x\in X$).
Therefore, local exactness of a twisted de Rham complex implies that 
{\em (\ref{tdR}) is actually a short exact sequence of sheaves}.

Second, the fact that $\tilde{\C}_{A'}$ is flat implies that the equation 
\begin{equation}\label{Cmu}
 \tilde{\C}_{A'}\m^{B'}=0
\end{equation}
admits non-trivial solutions. This can be formulated in a way closer to the 
theory of integrable systems\footnote{I am grateful to J. L. Jaramillo for suggesting looking into this.}. 
More precisely, consider a primed spin frame $(o^{A'},\iota^{A'})$ such that $\tilde{\C}_{A'}o^{B'}=0$ 
and $\tilde{\C}_{A'}\iota^{B'}=0$, and introduce the following operators acting on $\G(\mc{S}^{A'}(p)[w])$:
\begin{equation}
 L:=\tilde{\C}_{0'}=o^{A'}\xi^{A}\C_{A'A}, \qquad  M:=\tilde{\C}_{1'}=\iota^{A'}\xi^{A}\C_{A'A}.
\end{equation}
Then (\ref{Cmu}) adopts the form of an overdetermined linear system
\begin{equation}
 L\m^{A'}=0, \qquad M\m^{A'}=0.
\end{equation}
The compatibility condition for this system is that the operators $L$ and $M$ must commute.
From their definition we have
\begin{equation}
 [L,M]=\xi^A\xi^{B}\Box^{\C}_{AB},
\end{equation}
therefore, the commutativity of $L$ and $M$ is equivalent to the flatness of the connection (\ref{cbs}). 
Formally, we can think of $L,M$ as a Lax pair, see e.g. \cite{Mason96} and \cite{Dun10}.

\smallskip
Finally, a particular application of equation (\ref{Cmu}) is that their solutions constitute the tangent bundle
$T\T$ to the 2D twistor space. This can be seen by adapting the discussion of Bailey in \cite{Bailey, Bailey2} 
to our context. (See also \cite[Section 9.1]{Ward}, which uses a local twistor description.)

\subsubsection{Complex line bundles}\label{sec-lb}

We now turn to the construction of line bundles over $\T$. Let $W$ be a point in $\T$, and 
$\wt{W}$ the corresponding $\b$-surface in the spacetime. 
Consider the restriction of the bundle $\mc{O}(p)[w]$ to $\wt{W}$, and let $f$ be 
a section of this bundle.
Different points on the $\b$-surface $\wt{W}$ correspond, by definition, to the same point
$W\in\T$, so in order to define a fibre over $W$ we require $f$ to be 
covariantly constant over $\wt{W}$, namely
\begin{equation}
 \C_{X}f=0
\end{equation}
for all $X$ tangent to $\wt{W}$, or equivalently
\begin{equation}\label{ccs}
 \xi^A\C_{AA'}f=0.
\end{equation}
Compare to (\ref{twf0}), (\ref{twf}).
Now, the spinor $\m^{A'}=\xi^A\C_{A}{}^{A'}f \;(=0)$ can be regarded as a (weighted) element of the tangent 
bundle $T\wt{W}$, for which the connection is (\ref{cbs}), therefore, the integrability conditions 
for (\ref{ccs}) on the $\b$-surface $\wt{W}$ can be obtained by applying an extra derivative $\tilde{\C}_{B'}$ 
and taking the commutator, which yields
\begin{equation}
 \xi^A\xi^B\Box^{\C}_{AB}f=0.
\end{equation}
If $\xi^A$ is an SFR and a two-fold PND, then these integrability conditions are satisfied by virtue of 
(\ref{asdC}), thus (\ref{ccs}) is a non-trivial condition.
We then use this fact to construct a line bundle over $\T$, by defining the fibre over a point $W$ to be 
composed of sections of $\mc{O}(p)[w]$ that satisfy (\ref{ccs}). 
This bundle will be denoted by $\mc{O}_{\T}(p)[w]$. The construction generalizes 
the one in SD spacetimes (which is needed for the Penrose transform) that we reviewed in 
section \ref{sec-PT}, to our current situation. 
Below we will see that these bundles give solutions to the Teukolsky equations on the spacetime.

\subsection{Teukolsky equations and massless fields}\label{sec-te}

We will now show that the above twistor constructions are intimately related to the 
description of massless free fields propagating in curved, algebraically special spacetimes, 
and give a natural interpretation to the relation between this description and
the appearance of various twistor objects that are known in the literature. 
(See also remark \ref{rem-ef} below.)

\subsubsection{Teukolsky equations}

\begin{lem}\label{mthm}
Sections of the line bundles $\mc{O}_{\T}(p)[w]$ are automatically solutions 
of the Teukolsky equations for massless free fields in the (conformal) spacetime.
\end{lem}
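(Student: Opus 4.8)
The plan is to reproduce, with the twistor connection $\C_a$ in place of the Levi-Civita connection, the mechanism of Remark \ref{rem-we}: there a single extra derivative turned the condition of being constant on $\b$-surfaces into the wave equation $\Box f=0$, and here the same move will produce a \emph{modified} wave operator together with a curvature potential, i.e. a Teukolsky equation. Throughout I would use that $\xi^A$ is an SFR and a two-fold PND, which by the Goldberg-Sachs theorem is automatic for the (conformally) Einstein algebraically special spacetimes of interest.

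Let $f\in\G(\mc{O}_{\T}(p)[w])$, so by construction $f$ obeys (\ref{ccs}), namely $\xi^A\C_{AA'}f=0$. First I would differentiate this with the operator $\C_B{}^{A'}$. Since $\xi^A$ is an SFR, equation (\ref{derxi}) together with (\ref{SFR2}) gives $\C_a\xi^B=0$, so $\xi^A$ commutes through the derivative and, because the left-hand side of (\ref{ccs}) vanishes,
\begin{equation*}
 0=\C_B{}^{A'}\bigl(\xi^A\C_{AA'}f\bigr)=\xi^A\,\C_B{}^{A'}\C_{AA'}f.
\end{equation*}
Next I would feed the second-order operator into the irreducible decomposition (\ref{2op}); after raising and lowering the contracted primed index one gets $\C_B{}^{A'}\C_{AA'}=-\tfrac12\e_{AB}\Box^{\C}+\Box^{\C}_{AB}$, so that, using $\xi^A\e_{AB}=\xi_B$,
\begin{equation*}
 0=-\tfrac12\xi_B\,\Box^{\C}f+\xi^A\Box^{\C}_{AB}f.
\end{equation*}
The crux is the last term: substituting the curvature formula (\ref{asdC}) and contracting with $\xi^A$, the $\xi_A\xi_B$ piece dies since $\xi^A\xi_A=0$, while the $3p\Psi_2\,\xi_{(A}\eta_{B)}$ piece reduces, via $\xi^A\xi_A=0$ and $\xi^A\eta_A=-1$, to $-\tfrac32 p\Psi_2\,\xi_B f$. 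Every surviving term then carries the common factor $\xi_B$, and dividing it out (as $\xi^A$ is nowhere zero) yields
\begin{equation*}
 \bigl(\Box^{\C}+3p\Psi_2\bigr)f=0.
\end{equation*}

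This is precisely a Teukolsky equation in the `modified wave operator with potential' form: the principal part is the square $\Box^{\C}$ of the Teukolsky connection and the potential is the multiple $3p\Psi_2$ of the type-D scalar $\Psi_2$, the spin/boost content being carried by the integer $p$, so that varying $p$ recovers the equations for the various massless free fields. I expect the only genuine difficulty to be bookkeeping rather than conceptual: one must track the sign in reducing $\C_B{}^{A'}\C_{AA'}$ to $\pm\tfrac12\e_{AB}\Box^{\C}+\Box^{\C}_{AB}$, and keep straight which terms of (\ref{asdC}) survive the contraction with $\xi^A$. Conceptually the result is forced, since (\ref{ccs}) already annihilates the $\xi$-directional derivative, so one transverse derivative can only return a scalar identity, and (\ref{asdC}) guarantees this identity is a genuine second-order equation with a $\Psi_2$ potential rather than a vacuous one.
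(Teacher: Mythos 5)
Your derivation of the intermediate wave equation is correct and is exactly the paper's first step: differentiate (\ref{ccs}) with $\C_{B}{}^{A'}$, commute $\xi^A$ through using $\C_a\xi^B=0$ (which follows from (\ref{derxi}) and the SFR condition (\ref{SFR2})), and then combine the decomposition (\ref{2op}) with the curvature formula (\ref{asdC}); your bookkeeping ($\xi^A\xi_A=0$, $\xi^A\eta_A=-1$, the surviving overall factor of $\xi_B$) correctly lands on $(\Box^{\C}+3p\Psi_2)f=0$, which is the paper's (\ref{gwe}).

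The gap is in your last paragraph. Asserting that $(\Box^{\C}+3p\Psi_2)f=0$ ``is precisely a Teukolsky equation,'' with the spin content ``carried by the integer $p$,'' does not yet prove the lemma, because the operator $\Box^{\C}$ itself depends on \emph{both} weights $p$ and $w$, and the identification with Teukolsky's equations holds only for a specific pairing of them. The paper closes the argument by computing $\Box^{\C}$ on $\mc{O}(p)[w]$ explicitly in GHP form (eq. (\ref{sbox})), noting that the $(\k\k'-\sigma\sigma')$ term drops by the SFR condition, and then setting $p=-n$, $w=-n-1$, whereupon $(\Box^{\C}-3n\Psi_2)f=0$ collapses to
\begin{equation*}
 2[(\tho'-n\rho'-\tilde\rho')(\tho-\rho)-(\edt'-n\t'-\tilde\t)(\edt-\t)]f=0,
\end{equation*}
which is Teukolsky's equation for the spin-weight $-n$ component of a spin-$n/2$ field. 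That the conformal weight matters is not a formality: in section \ref{sec-mff} the paper points out that for $p=-1$, $w=-1$ the formally identical equation $(\Box^{\C}-3\Psi_2)h=0$ is \emph{not} the Teukolsky equation (\ref{teuk1}), and one must rescale to $w=-2$ to obtain it. So you still need to supply the GHP expression for $\Box^{\C}$ and the choice of weights $p=-n$, $w=-n-1$ to complete the proof.
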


\begin{proof}
Let $f$ be a section of the line bundle $\mc{O}_{\T}(p)[w]$ over $\T$,
then by definition it satisfies (\ref{ccs}). Applying an additional 
derivative, we get $0=\C_{B}{}^{A'}(\xi^A\C_{AA'}f)$. By virtue of (\ref{derxi}),
the $\xi^A$ factor can be commuted to the left. Using then identities (\ref{2op}) and (\ref{asdC}), 
we obtain
\begin{equation}\label{gwe}
 (\Box^{\C}+3p\Psi_2)f=0.
\end{equation}
All we need to show now is that this is essentially the Teukolsky equation.
To this end, we express the wave operator $\Box^{\C}$ acting on $\mc{O}(p)[w]$ in 
GHP form. In an arbitrary spacetime, after some lengthy calculations we get
\begin{align}
\nonumber \Box^{\C} =& 2[(\tho'+p\rho'-\tilde\rho')(\tho-\rho)-(\edt'+p\t'-\tilde\t)(\edt-\t)]\\
\nonumber &-2(p-(w+1))(\rho'\tho+\rho\tho'-\t'\edt-\t\edt'-2\L-\Psi_2) \\
& +2(p-(w+1)(p-w))(\rho\rho'-\t\t')-3p\Psi_2-2(\k\k'-\sigma\sigma'). \label{sbox}
\end{align}
Note that the last term, i.e. $(\k\k'-\sigma\sigma')$, is zero if $\xi^A$ is an SFR.
Now set $p=-n$, $w=-n-1$ for $n\in\mbb{N}$, i.e. $f$ is a section of $\mc{O}_{\T}(-n)[-n-1]$. Then
\begin{align}
\nonumber 0=& \;(\Box^{\C}-3n\Psi_2)f\\
  =& \; 2[(\tho'-n\rho'-\tilde\rho')(\tho-\rho)-(\edt'-n\t'-\tilde\t)(\edt-\t)]f, \label{teuk1}
\end{align}
where the zero in the LHS of the first line is a consequence of (\ref{gwe}), and in the second line we 
have simply replaced (\ref{sbox}). But the second line is exactly the Teukolsky equation for 
the spin-weight $-n$ component of a massless free field with spin $n/2$, as presented for 
example in \cite{Teu73}.
\end{proof}

\begin{remark}\label{rem-teq}
Note that the wave equation (\ref{gwe}) is the natural generalization of (\ref{we}), 
see remark \ref{rem-we}.
In that case $f$ was a representative of a \u{C}ech cohomology class in $\breve{H}^1(\CP^1,\mc{O}(2h-2))$, 
which can be thought of as a function on the spin bundle that is homogeneous in the spinor variables, 
satisfies (\ref{twf}), and is subject to coboundary equivalence. 
In our present situation, $f$ is a function on the line bundle $\la\xi^A\ra$
that is homogeneous in the spinor variables and satisfies 
the generalized equation (\ref{ccs}), but we do not have a cohomological interpretation of it.
\end{remark}

\subsubsection{Massless free fields}\label{sec-mff}

Let us now briefly examine how to obtain massless free fields from the constructions above.
We start by considering LH fields.
First, note that if $\vp_{A...L}$ is a totally symmetric section of $\mc{S}_{A...L}(0)[-1]$ 
(that is, a symmetric spinor field with $p=0$ and $w=-1$), then 
\begin{equation}\label{cC}
 \c^{AA'}\vp_{A...L}=\C^{AA'}\vp_{A...L}.
\end{equation}
Now let for example $f$ be a section of $\mc{O}_{\T}(-1)[-2]$. Using (\ref{cC}) it follows immediately that 
$\vp_A(x,\xi)=\xi_A f(x,\xi)$ is a LH Dirac field, $\c^{AA'}\vp_A=0$.
Similarly, for a section $f$ of $\mc{O}_{\T}(-n)[-n-1]$, the spinor
\begin{equation}\label{TLH2}
 \vp_{A...L}(x,\xi)=\xi_{A}...\xi_{L} f(x,\xi)
\end{equation}
(with $n$ factors of $\xi_A$) is a LH massless field with spin $n/2$, $\c^{AA'}\vp_{A...L}=0$.

\begin{remark}\label{rem-LHF}
The field (\ref{TLH2}) is the generalization of (\ref{TLH}) to our present situation. 
Notice that there are no problems with Buchdahl constraints since by assumption $\xi^A$ is 
a two-fold PND of the ASD curvature.
\end{remark}
The result above is not really new, it is actually an expression of the `Robinson theorem' (see e.g. 
\cite[Theorem (7.3.14)]{Penrose2} and \cite{Bailey, BaileyTN262}), adapted to our constructions.

\smallskip
Let us now examine RH fields. Contrary to the LH case, now we will use sections of $\mc{O}(p)[w]$ 
that are {\em not} also sections of $\mc{O}_{\T}(p)[w]$, i.e. they do not satisfy (\ref{ccs}).
(In the language of section \ref{sec-TT}, we will use functions on $\la \xi^A \ra$ that 
``do not descend'' to $\T$.)
Let $h$ be a section of $\mc{O}(-1)[-1]$, and consider the spinor field
\begin{equation}\label{RHD2}
 \phi_{A'}:=\xi^A\C_{AA'}h(x,\xi).
\end{equation}
Using (\ref{cC}), (\ref{derxi}), (\ref{2op}) and (\ref{asdC}), we get
\begin{equation}\label{idRHD2}
 \c^{AA'}\phi_{A'}=\tfrac{1}{2}\xi^A(\Box^{\C}-3\Psi_2)h,
\end{equation}
thus, (\ref{RHD2}) is a RH Dirac field if and only if $h$ is a solution of the wave equation 
$(\Box^{\C}-3\Psi_2)h=0$. 
It is important to emphasize here that, since we have chosen the weights of $h$ as 
$p_h=-1$ and $w_h=-1$ (which are needed in order for $\phi_{A'}$ in (\ref{RHD2}) to have 
the correct weights, namely $p_{\phi}=0$ and $w_{\phi}=-1$), this wave equation {\em is not} 
the Teukolsky equation given in (\ref{teuk1}). 
In order to get the Teukolsky equation (\ref{teuk1}), we need a conformal factor $\mr\Om$ 
(i.e. an element of $\G(\mc{O}(0)[1])$) such that $\C_a\mr\Om=0$, thus the field $\tilde{h}=\mr\Om^{-1}h$ 
has weights $p_{\tilde{h}}=-1$ and $w_{\tilde{h}}=-2$ and consequently satisfies (\ref{teuk1}) (for $n=1$) 
provided that $h$ satisfies $(\Box^{\C}-3\Psi_2)h=0$. The requirement $\C_a\mr\Om=0$ 
is a non-trivial condition, see section \ref{sec-2TM} below (especially remark \ref{rem-ef}).

\begin{remark}\label{rem-RHD}
The field (\ref{RHD2}) is the generalization of (\ref{RHD}) to our case (note their analogous structure). 
In the case of (\ref{RHD}), it satisfies the Dirac equation because $h_i$ satisfies the wave equation, 
which in turn is a consequence of the fact that $\Box h_i$ is a global function in $\CP^1$, homogeneous of 
degree $-1$. In other words, $\Box h_i=0$ is automatic from the structure of the cohomology groups 
involved in the construction. In our current situation it seems that we do not have enough structure 
to do cohomology\footnote{Note that, roughly speaking, a fibre of $\la\xi^A\ra$ corresponds to a 
single point in a fibre of $\mbb{P}\mc{S}^A$.}, 
so we were not able to give a cohomological interpretation to $h$ in (\ref{RHD2}).
\end{remark}

Consider now RH Maxwell fields. For this case we find it more convenient to propose the 
Ans\"atz $A_{AA'}=\xi_A\xi^B\C_{BA'}h$, where $h\in\G(\mc{O}(-2)[-1])$. 
The LH and RH parts of the 2-form $F_{ab}=\c_{[a}A_{b]}$ are respectively
$\psi_{AB}=\c_{A'(A}A_{B)}{}^{A'}$ and $\phi_{A'B'}=\c_{A(A'}A_{B')}{}^{A}$.
The vector potential $A_{AA'}$ has weights $p=0$ and $w=0$, and one can show that this implies
that we can replace $\c_a$ by $\C_a$ in the formulas for $\psi_{AB}$ and $\phi_{A'B'}$.
An easy calculation using (\ref{derxi}), (\ref{2op}) and (\ref{asdC}) leads to
\begin{align}
 \psi_{AB} &=-\tfrac{1}{2}\xi_A\xi_B(\Box^{\C}-6\Psi_2)h, \label{LHM2} \\
 \phi_{A'B'} &=\xi^A\xi^B\C_{AA'}\C_{BB'}h \label{RHM2}
\end{align}
(the symmetrization in $A'B'$ not being needed by virtue of (\ref{asdC})).
Therefore, $\psi_{AB}=0$ if and only if $h$ is a solution of the wave equation $(\Box^{\C}-6\Psi_2)h=0$, 
case in which the RH part $\phi_{A'B'}$ is a solution of the Maxwell equations.

\begin{remark}[Yang's equation]
Our procedure here turns out to be closely related to other approaches for 
solving the ASD Yang-Mills equations. 
That is, we are solving the equation $F_{ab}=\phi_{A'B'}\e_{AB}$, i.e., the ASD curvature 
of $A_{AA'}$ vanishes, $\psi_{AB}=0$. This is done by using a solution $h$ of the Teukolsky-like 
equation $(\Box^{\C}-6\Psi_2)h=0$. But we have seen in the proof of lemma \ref{mthm} that this 
equation is $\eta^A\C_{AA'}(\xi^{B}\C_{B}{}^{A'}h)=0$, which can be interpreted as a generalized 
version of what is known as {\em Yang's equation} (see pp. 165 in \cite{Mason96}, and also the 
discussion leading to eq. (3.1.3) in that reference).
\end{remark}

\subsubsection{Comments on gravitational perturbations and Hertz potentials}\label{sec-gp}

Gravitational perturbations of a curved spacetime cannot be described with the constructions 
above, for a number of reasons: $(i)$ the corresponding field equations are not the ones of a 
massless free field, $(ii)$ the Einstein equations are not conformally invariant, and 
$(iii)$ arbitrary perturbations in principle do not satisfy the conditions for admitting a 
2D twistor space.
Nevertheless, we find it useful to make some comments on this case and point out some 
interesting properties, especially regarding $(iii)$. (See also remark \ref{rem-l2dts} below.)

All of our constructions so far depend on the existence of a 2-dimensional twistor manifold. 
Even though this is much less restrictive than the existence of a twistor 3-manifold
(since the latter would imply SD curvature), the condition still singles out a particular 
class of spacetimes, namely (by proposition \ref{prop-PR}) those admitting an 
SFR (which we have also assumed to be a two-fold PND).
When perturbing (linearly) a spacetime, the metric becomes 
$g_{ab}+\ve h_{ab}$, and the property of having an SFR is generally destroyed by the 
perturbation, so one does not expect the constructions of the previous sections to 
apply to perturbed spacetimes.

Now, a particularly relevant method of generating metric perturbations is by the 
so-called Hertz/Debye potentials; this has been of interest both in past and recent years, 
see e.g. \cite{Kegeles, Aks16, Araneda2016, Pra18}.
A Hertz potential is a solution of higher spin field equations (Dirac, Maxwell, linearized gravity)
that is obtained by applying linear differential operators to a scalar field (so-called Debye potential)
that solves a certain scalar, wave-like equation
(for example, the field $h$ in (\ref{RHD2}) and (\ref{RHM2}) is a Debye potential).
These potentials are of much interest in the stability problem for black holes, since it 
is conjectured that all relevant gravitational perturbations can be generated this way 
(see e.g. \cite{Pra18, Aks16}). We will prove the following:

\begin{thm}\label{thm}
 Linearized metric perturbations generated by Hertz potentials possess a 2-dimensional twistor manifold.
\end{thm}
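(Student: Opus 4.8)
The plan is to reduce the statement, via Proposition \ref{prop-PR}, to the existence of a shear-free null geodesic congruence (SFR) in the perturbed conformal structure, and then to verify that the repeated principal null direction $\xi^A$ used in the reconstruction remains such a congruence to first order. Recall that for $\xi^A$ the SFR condition reads $\kappa=\sigma=0$ (cf. (\ref{SFR2})), equivalently $\xi^A\xi^B\c_{AA'}\xi_B=0$ as in (\ref{SFR}), and that the background admits it by assumption. Under a linear perturbation $g_{ab}\mapsto g_{ab}+\ve h_{ab}$ the Levi-Civita connection changes as $\c\mapsto\c+\ve\,\delta\G$ with $\delta\G$ built from $\c h$, and the spin frame acquires a first-order correction; so the first task is to compute the linearized quantities $\dot\kappa,\dot\sigma$ and to show that they vanish when $h_{ab}$ is of Hertz type.

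First I would write the Hertz reconstruction in spinor form. A Debye potential is a scalar solving a wave-like equation of the type appearing in section \ref{sec-mff} (e.g. $(\Box^{\C}-6\Psi_2)h=0$ for Maxwell, with the appropriate weight for linearized gravity), and the metric perturbation is obtained by applying a second-order differential operator to it. In the radiation gauge adapted to $\xi^A$, the essential structural feature is that $h_{ab}$ is \emph{polarized along} $\xi^A$, taking the spinor form
\begin{equation*}
 h_{ab}=\xi_A\xi_B\Sigma_{A'B'}
\end{equation*}
(plus its complex conjugate in the real Lorentzian case), with $\Sigma_{A'B'}=\Sigma_{(A'B')}$ built from the Debye potential. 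This is the gravitational analog of the vector potential $A_{AA'}=\xi_A\xi^B\C_{BA'}h$ produced in section \ref{sec-mff}, which carried a single factor of $\xi_A$; one checks immediately that it satisfies the radiation-gauge conditions $\ell^a h_{ab}=0$ and $g^{ab}h_{ab}=0$ by virtue of $\xi^A\xi_A=0$.

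The key mechanism is precisely this skew contraction $\xi^A\xi_A=0$. Inserting $h_{ab}=\xi_A\xi_B\Sigma_{A'B'}$ into $\delta\G$ and forming the linearized SFR quantity $\xi^A\xi^B\delta\G_{AA'B}{}^C\xi_C$, I expect the terms to split into three families: those in which both factors $\xi_A\xi_B$ coming from $h_{ab}$ are saturated by the external $\xi^A\xi^B$, which die by $\xi^A\xi_A=0$; those in which a raised $\xi^C$ meets the free $\xi_C$, which die for the same reason; and those in which a derivative falls on one of the $\xi$'s. The last family is controlled by the background relations $\xi^A\xi^B\c_{AA'}\xi_B=0$ and the two-fold PND condition $\Psi_{ABCD}\xi^B\xi^C\xi^D=0$ (available by the Goldberg-Sachs theorem in the conformally Einstein, algebraically special setting). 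Any remaining contribution should be absorbed either by the Debye equation satisfied by $h$ or by an admissible first-order redefinition $\xi^A\mapsto\xi^A+\ve\,\delta\xi^A$ of the congruence, yielding $\dot\kappa=\dot\sigma=0$; then Proposition \ref{prop-PR} gives the 2-dimensional twistor manifold.

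The main obstacle will be the bookkeeping in the third family: one must carry out the full linearized computation of $\c h$, track the induced change $\delta\xi^A$ in the spin frame, and show that every term not killed by a repeated $\xi$-contraction is proportional either to the background SFR condition or to the Debye wave equation. I would organize this by projecting onto the tetrad (\ref{nt}) and using the GHP operators, exactly as in the proofs of section \ref{sec-curv}, so that the cancellations become manifest through the vanishing of $\kappa,\sigma,\Psi_0$ and $\Psi_1$ in the background.
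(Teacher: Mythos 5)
Your strategy is genuinely different from the paper's. The paper first proves that the linearized ASD Weyl spinor of the Hertz reconstruction (\ref{h}) is algebraically special (lemma \ref{lem-as}, via the identities (\ref{wks})--(\ref{dwks})), and then deduces $\dot\kappa=\dot\sigma=0$ from the \emph{linearized Bianchi identities}, using the linearized Einstein equations and $\mr\Psi_2\neq0$ (lemma \ref{lem-lSFR}, a one-directional linearized Goldberg--Sachs theorem); the $\b$-surface criterion (\ref{lbs}) then finishes the argument. You propose instead to linearize the spin coefficients directly from the algebraic form of $h_{ab}$. That route is viable, but as written it has two genuine gaps. First, the form $h_{ab}=\xi_A\xi_B\Sigma_{A'B'}$ is not automatic from ``radiation gauge'': in (\ref{h}) the derivatives act on the $\xi$ factors, and commuting them out to reach (\ref{h2}) requires the identity (\ref{dwks}), which rests on $\C_a\xi^B=0$, i.e.\ on the background SFR condition; you assert the form rather than derive it. Second, and more importantly, the decisive computation is never carried out: you only ``expect'' the three families of terms to cancel, and you defer any residue to ``the Debye equation or an admissible redefinition $\xi^A\mapsto\xi^A+\ve\,\delta\xi^A$'' without establishing that such a residue exists, vanishes, or can be so absorbed. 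A proof must close that loop.

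For what it is worth, the loop does close, and more cleanly than your bookkeeping suggests. Once $h_{ab}=\xi_A\xi_B\Sigma_{A'B'}$ is in hand, one has the \emph{identical} (not merely on-shell) vanishing $h_{ab}\ell^b=h_{ab}m^b=0$, so the choice $\dot\ell^a=\dot m^a=0$ is compatible with the perturbed normalization conditions, and then $\dot\kappa=-m^a\ell^b\,\delta\G_{ba}{}^{c}\ell_c$ and $\dot\sigma=-m^am^b\,\delta\G_{ba}{}^{c}\ell_c$ with $\delta\G_{ab}{}^{c}=\tfrac12 g^{cd}(\c_ah_{bd}+\c_bh_{ad}-\c_dh_{ab})$. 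Every term is of the form $\c_{(\cdot)}(h_{bc}X^bY^c)-h_{bc}Y^c\,\c X^b-h_{bc}X^b\,\c Y^c$ with $X,Y\in\{\ell,m\}$, and each piece vanishes because $h_{bc}\ell^c$ and $h_{bc}m^b$ vanish as tensors. So no derivative ever ``falls on a $\xi$'' in a surviving term, and neither the Debye wave equation nor the two-fold PND condition is needed at this stage --- in contrast with the paper's route, which does invoke the linearized field equations and $\mr\Psi_2\neq0$ in lemma \ref{lem-lSFR}. Your proposal correctly identifies the mechanism ($\xi^A\xi_A=0$) but stops short of the argument that makes it work.
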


Below, in remark \ref{rem-l2dts}, we comment on the usefulness of this result and its possible applications.
We will prove theorem \ref{thm} by showing that the (linearized) ASD Weyl spinor of such perturbations 
is algebraically special (lemma \ref{lem-as} below), and that this implies that the perturbed 
spacetime still possesses a shear-free null geodesic congruence (lemma \ref{lem-lSFR} below).
That is, we obtain the linearized version of the Goldberg-Sachs 
theorem in one direction\footnote{Of course, the `if and only if' part of the linearized version 
of the Goldberg-Sachs theorem is not valid, as shown in \cite{Dain00}.}.

Let $f$ be a section of $\mc{O}(-4)[-5]$. From the expression (\ref{sbox}) for $\Box^{\C}$ 
and eq. (2.14) in \cite{Teu73}, one deduces that the Teukolsky equation for gravitational 
perturbations is
\begin{equation}\label{tegp}
 (\Box^{\C}-18\Psi_2)f=0.
\end{equation}
(Observe that (\ref{tegp}) is {\em not} a particular case of (\ref{gwe}), 
since the explicit form of the operator $\Box^{\C}$ depends on $p$.) 
Let $\xi_{ABCD}=\xi_A\xi_B\xi_C\xi_D$, $\xi_A$  being an SFR. 
One can show (see \cite{Kegeles, Araneda2016, Aks16})
that if $f$ is a solution of the Teukolsky equation (\ref{tegp}), then the tensor field
\begin{equation}\label{h}
 h_{AA'BB'}=\c_{(A'}{}^{C}[(\c_{B')}{}^{D}+4\mf{f}_{B')}{}^{D})\xi_{ABCD}f]
\end{equation}
is a solution of the linearized Einstein equations. We have

\begin{lem}\label{lem-as}
The linearized ASD Weyl spinor of the metric perturbation (\ref{h}) is algebraically 
special: $\xi^A$ is a two-fold PND.
\end{lem}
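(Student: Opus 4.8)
The plan is to compute the linearized ASD Weyl spinor $\dot\Psi_{ABCD}$ of the perturbation (\ref{h}) and to verify the algebraic condition for a two-fold principal null direction, namely $\dot\Psi_{ABCD}\xi^B\xi^C\xi^D=0$. In the spin frame $(\xi^A,\eta^A)$ this is equivalent to the vanishing of the two Weyl scalars aligned with $\xi^A$, that is $\dot\Psi_0:=\dot\Psi_{ABCD}\xi^A\xi^B\xi^C\xi^D=0$ and $\dot\Psi_1:=\dot\Psi_{ABCD}\eta^A\xi^B\xi^C\xi^D=0$, so the whole problem reduces to showing that these two scalars vanish.

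First I would insert the explicit form (\ref{h}) into the standard spinor expression for the linearized ASD Weyl spinor of a trace-free symmetric metric perturbation $h_{(AB)(A'B')}$, which is a second-order operator of the schematic form $\dot\Psi_{ABCD}=\c_{(A}{}^{A'}\c_{B}{}^{B'}h_{CD)A'B'}$ together with background-curvature corrections. The essential structural feature is that (\ref{h}) is generated from $\xi_{ABCD}=\xi_A\xi_B\xi_C\xi_D$ by $\xi$-aligned operators: the combination $(\c_{B'}{}^{D}+4\mf{f}_{B'}{}^{D})$ is exactly the conformally weighted derivative adapted to $\xi_{ABCD}f$, so the construction places the perturbation in a radiation gauge aligned with $\xi^A$. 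I would then expand everything in the GHP formalism and use the SFR hypothesis $\k=\sigma=0$ together with (\ref{derxi}) to discard the terms in which a derivative differentiates one of the $\xi$ factors along the aligned directions; after this step the perturbation components become explicit GHP derivatives of the Debye potential $f$.

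Carrying out the two contractions then gives $\dot\Psi_0$ and $\dot\Psi_1$ as GHP expressions in $f$. I expect $\dot\Psi_0$ to collapse almost algebraically once $\k=\sigma=0$ and the background two-fold-PND condition $\Psi_0=\Psi_1=0$ (valid by Goldberg--Sachs, as recalled after (\ref{SFR2})) are imposed. The scalar $\dot\Psi_1$ is the delicate one: commuting the outer derivative past the inner weighted derivative and past the $\xi$ factors produces curvature terms through the Ricci identities, exactly as in the proof of lemma \ref{lem-Cbs}, and these must be reorganized together with the Lee-form contributions of the $4\mf{f}$ term. The surviving combination should assemble into the gravitational Teukolsky operator $\Box^{\C}-18\Psi_2$ acting on $f$, which vanishes because $f$ solves (\ref{tegp}); here the GHP form (\ref{sbox}) of $\Box^{\C}$ and the curvature identities (\ref{asdC}), (\ref{asdC2}) are the key tools.

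The main obstacle is precisely the bookkeeping in this last step for $\dot\Psi_1$: one must track every background-curvature and Lee-form term produced when the two derivatives are commuted, and check that they recombine into exactly $(\Box^{\C}-18\Psi_2)f$ with no residual piece. The SFR and two-fold-PND hypotheses on the background are what guarantee that no uncontrolled $\k$, $\sigma$, $\Psi_0$ or $\Psi_1$ terms survive to spoil the cancellation.
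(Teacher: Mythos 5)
Your reduction of the claim to $\dot\Psi_0=\dot\Psi_1=0$ is correct, and your instinct that (\ref{derxi}) together with $\k=\sigma=0$ is the essential input is the right one; but the proposal stops exactly where the proof has to happen, and the mechanism you propose for closing the gap is not the one that works. The paper's argument is purely algebraic and does \emph{not} use the Teukolsky equation (\ref{tegp}) at all. The decisive device is the weighted Killing spinor identity $D_{A'}{}^{(A}\xi^{B\ldots L)}=0$ of eq.\ (\ref{wks}) (with $D_a=\c_a+p(\w_a+B_a)$), which is an immediate consequence of $\C_a\xi^B=0$, together with its trace part (\ref{dwks}). Using (\ref{dwks}) one rewrites (\ref{h}) in the form (\ref{h2}), i.e.\ with the factor $\xi_{ABCD}$ pulled entirely to the left, so that the linearized ASD Weyl spinor becomes $\dot{\Psi}_{ABCD}=\tfrac{1}{2}D_{(A}{}^{A'}D_{B}{}^{B'}[\xi_{CD)}M_{A'B'}]$ as in (\ref{psih}). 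Because the whole expression is symmetrized over $ABCD$, identity (\ref{wks}) lets you commute $\xi_{CD}$ through both derivatives, giving $\dot{\Psi}_{ABCD}=\tfrac{1}{2}\xi_{(CD}D_{A}{}^{A'}D_{B)}{}^{B'}M_{A'B'}$; contracting with $\xi^B\xi^C\xi^D$, every term of the symmetrization then carries at least one factor $\xi_X\xi^X=0$, so $\xi^{BCD}\dot\Psi_{ABCD}=0$ identically --- for \emph{any} $f$ of the correct weights, whether or not it solves (\ref{tegp}).

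This affects your plan in two ways. First, the step you defer as ``bookkeeping'' --- showing that $\dot\Psi_1$ assembles into $(\Box^{\C}-18\Psi_2)f$ --- is chasing a cancellation that does not occur in that form: carried out correctly, the computation yields $\dot\Psi_1=0$ identically, with no residual wave-operator term, so there is nothing for the field equation to kill. The algebraic speciality of the reconstructed perturbation is a structural consequence of the $\xi$-aligned form of (\ref{h}), not of the equation of motion for the Debye potential. Second, attempting the term-by-term GHP expansion of $\dot\Psi_1$ without first commuting $\xi_{CD}$ outside the symmetrized second derivative is precisely where such calculations become unmanageable; the covariant identity (\ref{wks}) is the device that makes the bookkeeping trivial and is the missing idea here. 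As written, the proposal is a plausible outline whose central step is not carried out and whose proposed completion rests on an incorrect expectation, so it does not yet constitute a proof.
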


\begin{proof}
We have to prove that $\xi^B\xi^C\xi^D\dot\Psi_{ABCD}=0$, where $\dot\Psi_{ABCD}$ is the linearized 
ASD Weyl spinor of (\ref{h}).
A simple way to prove this is by considering a modified covariant derivative constructed from 
$\C_a$. Define $D_a=\c_a+p(\w_a+B_a)$ such that it acts on tensor/spinor fields with a $p$-weight.  
Note that for fields with $p=0$, $D_a$ coincides with the Levi-Civita derivative $\c_a$.
Letting $\xi^{A...K}=\xi^{A}...\xi^{K}$ ($n$ factors of $\xi^A$), equation (\ref{derxi}) implies 
the following two identities in terms of $D_a$:
\begin{align}
 & D_{A'}{}^{(A}\xi^{B....L)}=0, \label{wks} \\
 & (D_{A'A}+(n+1)\mf{f}_{A'A})\xi^{A...K}=0. \label{dwks}
\end{align}
Using (\ref{dwks}) for $n=4$ and the fact that $\xi_{ABCD}f$ has zero $p$-weight, 
we can write (\ref{h}) as
\begin{equation}\label{h2}
 h_{AA'BB'}=\xi_{ABCD}(D_{(A'}{}^{C}-5\mf{f}_{(A'}{}^{C})(D_{B')}{}^{D}-\mf{f}_{B')}{}^{D})f.
\end{equation}
Now, the linearized ASD Weyl spinor of a metric perturbation is given in general by (see 
\cite[Eq. (5.7.15)]{Penrose1})
\begin{equation*}
 \dot{\Psi}_{ABCD}=\tfrac{1}{2}\c_{(A}{}^{A'}\c_{B}{}^{B'}h_{CD)A'B'}+\tfrac{1}{4}g^{ef}h_{ef}\Psi_{ABCD}
\end{equation*}
Note that, since (\ref{h}) has zero $p$-weight, we can replace 
$\c_{a}$ by $D_{a}$ in this expression. Furthermore we have $g^{ef}h_{ef}=0$ for (\ref{h}), so we get
\begin{equation}\label{psih}
 \dot{\Psi}_{ABCD}=\tfrac{1}{2}D_{(A}{}^{A'}D_{B}{}^{B'}[\xi_{CD)}M_{A'B'}],
\end{equation}
where $M_{A'B'}=\xi_E\xi_F(D_{(A'}{}^{E}-5\mf{f}_{(A'}{}^{E})(D_{B')}{}^{F}-\mf{f}_{B')}{}^{F})f$.
Using (\ref{wks}), the $\xi_{CD}$ factor inside the bracket in (\ref{psih}) can be commuted to the 
left. Projecting then over $\xi^{BCD}$, it follows that $\xi^{BCD}\dot\Psi_{ABCD}=0$.
\end{proof}

We will now investigate the existence of $\b$-surfaces in the perturbed spacetime. 
Since the linearization of spinors is a subtle issue, we find it more clear to formulate
the discussion primarily in tensor terms. 
Consider a monoparametric family $g_{ab}(\ve)$ such that $\mr{g}_{ab}\equiv g_{ab}(0)$ is 
our background spacetime. Consider also four vector fields 
$\ell^a(\ve)$, $n^a(\ve)$, $m^a(\ve)$ and $\tilde{m}^a(\ve)$ that constitute a null tetrad 
for all values of the parameter $\ve$ 
(that is, $g_{ab}(\ve)\ell^a(\ve)n^b(\ve)=1=-g_{ab}(\ve)m^a(\ve)\tilde{m}^b(\ve)$ 
and all other products vanish). 
We assume all fields to depend smoothly on $\ve$, so that we have the Taylor expansions 
$\ell^a(\ve)=\mr{\ell}^a+\ve\dot{\ell}^a+O(\ve^2)$, $m^a(\ve)=\mr{m}^a+\ve\dot{m}^a+O(\ve^2)$, 
etc\footnote{In what follows, for a quantity $T(\ve)$ we use the notation $\mr{T}\equiv T(0)$
and $\dot{T}\equiv\frac{d}{d\ve}|_{\ve=0}T(\ve)$.}.
At any point $x\in\M$, the vector fields $\ell^a(\ve)$ and $m^a(\ve)$ generate a $\b$-plane. 
The condition for this $\b$-plane to be the tangent plane to a $\b$-surface 
is that the commutator of $\ell^a(\ve)$ and $m^a(\ve)$ should be a linear combination of them. 
Assuming that the background spacetime possesses an SFR (which implies $\mr\k=0=\mr\sigma$),
to linear order we find
\begin{equation}\label{lbs}
 [\mr{\ell}+\ve\dot{\ell},\mr{m}+\ve\dot{m}]^a=
 a(\mr{\ell}^a+\ve\dot{\ell}^a)+b(\mr{m}^a+\ve\dot{m}^a)+\ve(-\dot\k\mr{n}^a
 +\dot\sigma\mr{\tilde{m}}{}^{a})+O(\ve^2)
\end{equation}
for some scalar fields $a, b$. This means that, to linear order, the $\b$-surface condition 
is satisfied if and only if $\dot\k=0=\dot\sigma$. 

\begin{lem}\label{lem-lSFR}
Consider a background Einstein spacetime that possesses an SFR, and such that $\mr\Psi_2\neq0$. 
Consider also a perturbation $h_{ab}$ of this spacetime that satisfies the linearized Einstein 
vacuum equations (cosmological constant allowed), and let $\dot{\Psi}_{ABCD}$ be the linearized 
ASD Weyl curvature spinor of $h_{ab}$. If $\dot{\Psi}_{ABCD}$ is algebraically special 
along the background PND, then 
\begin{equation}\label{lSFR}
 \dot\k=0=\dot\sigma.
\end{equation}
\end{lem}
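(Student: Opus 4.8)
The plan is to prove a one-directional \emph{linearized} Goldberg--Sachs theorem by linearizing the GHP form of the Bianchi identities about the background. First I would record the background data. Since $\mr{g}_{ab}$ is Einstein and admits the SFR $\mr\xi^A$, the SFR condition is $\mr\k=\mr\sigma=0$ by (\ref{SFR2}), and by the Goldberg--Sachs theorem (the background being Einstein) $\mr\xi^A$ is a two-fold PND, i.e. $\mr\Psi_0=\mr\Psi_1=0$; the Einstein condition also gives $\mr\Phi_{ABA'B'}=0$ with $\mr\Lambda$ constant. Crucially, because $h_{ab}$ satisfies the \emph{linearized} vacuum equations, the trace-free Ricci spinor remains zero to first order, $\dot\Phi_{ABA'B'}=0$, so no Ricci terms enter the linearized identities; this is exactly where the vacuum hypothesis is used.

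The key preliminary step is to fix the first-order freedom in the spin frame so that the GHP Weyl scalars $\Psi_0,\Psi_1$ vanish to \emph{second} order in $\ve$. I would hold the congruence fixed, $\dot\xi^A=0$, and only perturb the remaining frame data. The hypothesis that $\dot\Psi_{ABCD}$ is algebraically special along the background PND, namely $\xi^B\xi^C\xi^D\dot\Psi_{ABCD}=0$, says precisely that $\mr\xi^A$ is still a two-fold PND to first order, so this gauge is admissible. Contracting the free index of $\xi^B\xi^C\xi^D\dot\Psi_{ABCD}=0$ with $\mr\xi^A$ and with $\mr\eta^A$ shows that the first-order GHP scalars satisfy $\dot\Psi_0=\dot\Psi_1=0$; the contributions from the frame variation $\dot\eta^A$ drop out because they are proportional to the vanishing background components $\mr\Psi_0,\mr\Psi_1$. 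Hence, in this gauge, $\Psi_0=O(\ve^2)$ and $\Psi_1=O(\ve^2)$, while $\k=\ve\dot\k+O(\ve^2)$ and $\sigma=\ve\dot\sigma+O(\ve^2)$.

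Next I would substitute this into the two vacuum Bianchi identities in GHP form (see \cite{Penrose1}) that isolate $\k\Psi_2$ and $\sigma\Psi_2$,
\[
 \tho\Psi_1-\edt'\Psi_0 = 4\rho\Psi_1-\tau'\Psi_0-3\k\Psi_2,
\]
\[
 \tho'\Psi_0-\edt\Psi_1 = \rho'\Psi_0-4\tau\Psi_1+3\sigma\Psi_2.
\]
In each identity every term on the left, and every term on the right except the one proportional to $\k\Psi_2$ (resp. $\sigma\Psi_2$), carries a factor of $\Psi_0$ or $\Psi_1$ (or a derivative thereof) and is therefore $O(\ve^2)$. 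Using $\mr\k=\mr\sigma=0$, the surviving terms expand as $-3\k\Psi_2=-3\ve\,\dot\k\,\mr\Psi_2+O(\ve^2)$ and $3\sigma\Psi_2=3\ve\,\dot\sigma\,\mr\Psi_2+O(\ve^2)$. Matching the coefficients of $\ve$ then gives $\dot\k\,\mr\Psi_2=0$ and $\dot\sigma\,\mr\Psi_2=0$, and since $\mr\Psi_2\neq0$ by assumption we conclude $\dot\k=0=\dot\sigma$, which is (\ref{lSFR}); by (\ref{lbs}) this is the linearized $\b$-surface condition.

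The step I expect to be the main obstacle is the frame bookkeeping of the second paragraph: one must be certain that the \emph{GHP} scalars $\dot\Psi_0,\dot\Psi_1$ (defined with the perturbed frame) genuinely vanish, since a first-order rotation tilting $\xi^A$ toward $\eta^A$ would reintroduce a term $\dot\Psi_1\sim\mr\Psi_2\neq0$ and break the argument. The hypothesis $\xi^B\xi^C\xi^D\dot\Psi_{ABCD}=0$ is exactly the condition that makes the fixed-congruence gauge admissible; once $\Psi_0,\Psi_1=O(\ve^2)$ is established, the remaining matching of $O(\ve)$ terms is immediate.
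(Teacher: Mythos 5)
Your proof is correct and follows essentially the same route as the paper: linearizing the two GHP Bianchi identities that isolate $\k\Psi_2$ and $\sigma\Psi_2$, using the background Goldberg--Sachs and Einstein conditions together with $\dot\Phi_{ab}=0$ and $\dot\Psi_0=\dot\Psi_1=0$ to reduce them to $\dot\k\,\mr\Psi_2=0$ and $\dot\sigma\,\mr\Psi_2=0$. Your extra care in fixing the first-order spin frame so that the GHP scalars $\dot\Psi_0,\dot\Psi_1$ genuinely vanish is a point the paper leaves implicit, but it does not change the argument.
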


\begin{proof}
The proof is immediate by considering the following two Bianchi identities in GHP form, which are 
valid for an arbitrary spacetime:
\begin{align*}
 & (\tho-4\rho)\Psi_1-(\edt'-\t')\Psi_0-(\tho-2\tilde{\rho})\Phi_{01}+(\edt-\tilde{\t}')\Phi_{00}= 
  -3\k\Psi_2-2\sigma\Phi_{10}+2\k\Phi_{11}+\tilde{\k}\Phi_{02} \\
 & (\tho'-\rho')\Psi_0-(\edt-4\t)\Psi_1-(\edt-2\tilde{\t}')\Phi_{01}+(\tho'-\tilde{\rho}')\Phi_{02}= 
 3\sigma\Psi_2-2\k\Phi_{12}+2\sigma\Phi_{11}+\tilde{\sigma}\Phi_{00}.
\end{align*}
The Goldberg-Sachs theorem for the background solution implies that $\mr\k=\mr\sigma=0=\mr\Psi_0=\mr\Psi_1$, 
and the background Einstein equations are $\mr\Phi_{ab}=0$.
Linearizing the above Bianchi identities around the background solution, imposing the linearized Einstein 
equations $\dot\Phi_{ab}=0$, and the two-fold PND condition $\dot\Psi_0=\dot\Psi_1=0$, we get 
$\dot\k\mr\Psi_2=0$ and $\dot\sigma\mr\Psi_2=0$, which implies (\ref{lSFR}).
\end{proof}

It was shown in \cite{Dain00} that the Goldberg-Sachs theorem is not valid in linearized
gravity. More precisely, the results of \cite{Dain00} show that the linearized version of 
the Goldberg-Sachs theorem is not valid {\em in one direction}: the existence of an SFR in 
a perturbed spacetime does not imply that the corresponding linearized Weyl tensor is 
algebraically special. Lemma \ref{lem-lSFR} asserts that the converse is actually true.

\smallskip
Summarizing, from equation (\ref{lbs}) we see that, at the linearized level, the existence 
of $\b$-surfaces requires $\dot\k=0=\dot\sigma$, and from lemma \ref{lem-lSFR} we see that 
this condition is satisfied as long as $\dot\Psi_0=\dot\Psi_1=0$ (and the linearized Einstein 
equations hold too). Lemma \ref{lem-as} implies that metric perturbations generated by Hertz 
potentials satisfy these requirements, thus, we conclude that the perturbed spacetime admits 
$\b$-surfaces to linear order.

\begin{remark}\label{rem-l2dts}
The fact that the perturbed spacetime admits a 2D twistor space at the linearized level
has potentially interesting consequences. 
More precisely, complex spacetimes that admit totally null surfaces and that are 
half-algebraically special (i.e. such that one of the Weyl curvature spinors is algebraically special) 
are sometimes called `Hyperheaven spaces' and were studied thoroughly by Pleba\'nski 
and collaborators\footnote{I am very grateful to M. Dunajski and L. Mason
for discussions about this and for suggesting references.}, see \cite{Ple76, Fin76, Boy79}.
Lemmas \ref{lem-as} and \ref{lem-lSFR} suggest that our construction here may be a 
linearized version of the (non-linear) Hyperheaven construction, and that 
the Hertz potential (\ref{h}) and the Teukolsky equation (\ref{tegp}) might just be the 
linearized versions of the Hyperheaven metric reconstruction and the hyperheavenly equation 
respectively. Work on this is in progress \cite{Ara19}.
\end{remark}

\section{Spaces with two 2D twistor spaces}\label{sec-2TM}

\subsection{Preliminaries}

So far we have assumed the existence of a single 2D twistor manifold $\T$, which 
by proposition \ref{prop-PR} is equivalent to the existence of an SFR on a conformal structure.
The existence of two independent 2D twistor spaces, say $\T$ and $\T'$, implies that the 
conformal spacetime admits two families of null, geodesic, shear-free congruences. 
In this section we will analyse the case where the two 2D twistor spaces are associated to 
$\b$-surfaces.

Before studying this case in more detail, it is worth noting an interesting related 
construction\footnote{I thank M. Dunajski for bringing this reference to my attention.} 
\cite{Dun06} involving more than one 2D twistor space. 
Consider a conformal structure $(\M,[g])$ that admits a null conformal Killing vector, i.e. a 
vector field $K^a$ such that $\pounds_{K}g_{ab}\propto g_{ab}$ for any metric in the conformal class. 
Since $K^a$ is null we can write it as $K^a=\xi^A\m^{A'}$ for some spinor fields $\xi^A$ and $\m^{A'}$.
Then it is not difficult to show that the conformal Killing equation for $K^a$ implies 
$\xi^{A}\xi^{B}\c_{AA'}\xi_B=0$ and $\m^{A'}\m^{B'}\c_{AA'}\m_{B'}=0$, i.e. both $\xi^A$ and $\m^{A'}$ 
are geodesic and shear-free. Thus they give two independent foliations of $\M$ by totally null 2-surfaces: 
$\b$-surfaces for $\xi^A$ and $\a$-surfaces for $\m^{A'}$, and the moduli spaces of them give two 
kinds of 2D twistor spaces, that in usual twistor terminology would be `dual' to each other.
The $\a$- and $\b$- surfaces in the spacetime intersect along integral curves of $K^a$, which are null geodesics.
If, furthermore, the conformal structure is ASD, then $\a$-surfaces exist automatically and form a 3-dimensional 
twistor space $\PTc$ \cite{Pen76}, and the special $\a$-surfaces of $\m^{A'}$ give a hypersurface in $\PTc$. 
One can then ask what is the relationship between the space of $\b$-surfaces of $\xi^A$ (what we are here calling a 
2D twistor space $\T$) and the twistor space $\PTc$ of the ASD conformal structure;
this case was thoroughly studied in \cite{Dun06}, we refer to it for more details (see also \cite{Cal06}).

In this section we are interested, instead, in the existence of two 2D twistor spaces associated to foliations 
of $\M$ by $\b$-surfaces. For conformally Einstein spacetimes
this is naturally associated to Petrov type D spaces, by virtue of the Goldberg-Sachs theorem;
this case is particularly interesting because it includes the stationary black hole solutions 
(with or without cosmological constant).

\subsection{Holomorphic vector bundles}\label{sec-hvb}

From now on we consider two 2D twistor spaces, $\T$ and $\T'$, associated to the 
projective spinors $o^A$ and $\iota^A$ respectively.
Let us first see that the almost-complex structure (\ref{J}) in this case can be seen as 
induced naturally by the complex manifold structure of the $\b$-surfaces.
For each point $p$ in $\M$ there are two $\b$-surfaces, say $\Sigma$ and $\Sigma'$, 
passing through it. Suppose that $\Sigma$ has complex coordinates $(\tilde{w},\tilde{z})$ 
and tangents $\tilde{\p}_w,\tilde{\p}_z$, and $\Sigma'$ has complex coordinates $(w,z)$
and tangents $\p_w,\p_z$.
The point $p$ can then be given coordinates $(\tilde{w},\tilde{z},w,z)$, 
and the tangent space $T_p\M$ is spanned by $(\tilde{\p}_w,\tilde{\p}_z,\p_w,\p_z)$. 
Similarly, the cotangent space $T^{*}_p\M$ is spanned by the dual basis 
$({\rm d}\tilde{w}, {\rm d}\tilde{z},{\rm d}w, {\rm d}z)$.
The complex manifold structure induces an almost-complex structure $J:T_p\M\to T_p\M$ as usual: 
$J\p_w=+i\p_w$, $J\p_z=+i\p_z$, $J\tilde{\p}_w=-i\tilde{\p}_w$, $J\tilde{\p}_z=-i\tilde{\p}_z$.
In terms of these bases, we have the standard expression (see e.g. \cite[eq. (8.21)]{Nak03})
\begin{equation}\label{J2}
J=i{\rm d}z\otimes\p_z+i{\rm d}w\otimes\p_w-i{\rm d}\tilde{z}\otimes\tilde{\p}_z-i{\rm d}\tilde{w}\otimes\tilde{\p}_w.
\end{equation}
But from the basic definition of $\b$-surfaces, we know that the tangents to $\Sigma$ must have 
the form $\tilde{\p}_w=o^A\m^{A'}\p_{AA'}$ and $\tilde{\p}_z=o^A\n^{A'}\p_{AA'}$ for some $\m^{A'}$
and $\n^{A'}$, and likewise the tangents to $\Sigma'$ have to be $\p_{w}=\iota^A\m^{A'}\p_{AA'}$ 
and $\p_{z}=\iota^A\n^{A'}\p_{AA'}$. Choosing the normalization $o_A\iota^A=1=\m_{A'}\n^{A'}$, 
for the dual basis we get ${\rm d}\tilde{w}=\iota_A\n_{A'}{\rm d}x^{AA'}$, 
${\rm d}\tilde{z}=-\iota_A\m_{A'}{\rm d}x^{AA'}$, ${\rm d}w=-o_A\n_{A'}{\rm d}x^{AA'}$ and 
${\rm d}z=o_A\m_{A'}{\rm d}x^{AA'}$. Replacing then these expressions in (\ref{J2}), we easily find
$J=J_{AA'}{}^{BB'}{\rm d}x^{AA'}\otimes\p_{BB'}$, with the components $J_{AA'}{}^{BB'}$ given 
exactly by (\ref{J}) (recall that in this section we use $\xi^A\equiv o^A$ and $\eta^A\equiv \iota^A$).
This shows that, in the case we have two 2D twistor spaces, the almost-complex structure (\ref{J}) acquires
a natural interpretation as induced by the complex manifold structure of the foliations by $\b$-surfaces.

\begin{remark}
For the case with just one 2D twistor space studied in section \ref{sec-2DTM} (that is, with just one 
foliation of $\M$ by $\b$-surfaces), we can still define an almost-complex structure by 
$J\tilde{\p}_w=-i\tilde{\p}_w$, $J\tilde{\p}_z=-i\tilde{\p}_z$, and $Ju=iu$, $Jv=iv$ where 
$u,v$ are such that $(\tilde{\p}_w,\tilde{\p}_z,u,v)$ is a basis for the tangent space; this way we end 
up with (\ref{J}) again. But in such case this $J$ is not induced by a complex manifold structure, 
since it is not integrable.
\end{remark}

Now, the complex structure allows us to give a notion of {\em holomorphicity}. More precisely, the 
fact that the map $J$ has eigenvalues $+i,+i,-i,-i$ allows a decomposition of any tangent space 
as $T_p\M=T^{(1,0)}_p\M\oplus T^{(0,1)}_p\M$, where $T^{(1,0)}_p\M$ corresponds to the eigenvalue $+i$
and its elements are called holomorphic vectors, and $T^{(0,1)}_p\M$ corresponds to $-i$
and its elements are anti-holomorphic vectors. We can do the same for the cotangent space, 
and more generally we can decompose the bundle of $k$-forms into type $(r,s)$-forms in the usual
way, i.e. $\L^k\M=\bigoplus_{r+s=k}\L^{(r,s)}\M$. This allows us to introduce Dolbeault operators, 
which are a convenient way of capturing the notion of holomorphic fields (see e.g. \cite[Section 2.2]{Ada17} 
and \cite[Section 9.5]{Mason96}).
That is, introducing the projection to type $(r,s)$-forms $\pi_{r,s}:\L^{k}\M\to\L^{(r,s)}\M$, 
we define the Dolbeault operators $\tilde{\p}:=\pi_{r,s+1}\circ{\rm d}$ 
and $\p:=\pi_{r+1,s}\circ{\rm d}$. In terms of complex 
coordinates this is $\tilde{\p}={\rm d}\tilde{x}^{A'}\wedge\tilde{\p}_{A'}$ 
and $\p={\rm d}x^{A'}\wedge\p_{A'}$, where 
$\tilde{x}^{A'}=(\tilde{w},\tilde{z})$, $x^{A'}=(w,z)$ and $\tilde{\p}_{A'}=o^{A}\p_{AA'}$, 
$\p_{A'}=\iota^A\p_{AA'}$.
An ordinary scalar function is said to be holomorphic with respect to this complex 
structure if $\tilde{\p}f=0$.
But, as we have seen in section \ref{sec-2DTM}, we need sections of the line bundles 
$\mc{O}(p)[w]$. The operator $\tilde{\p}$ is not a connection in these bundles, since it does 
not map sections to sections. 
What we can do is combine this operator with the connection $\C_a$ and define a 
{\em partial connection} \cite[Section 9.5]{Mason96}
(or deformation of the complex structure, in the sense of e.g. \cite[Section 4.1]{Ada17})
as $\tilde{\p}_{\C}:={\rm d}\tilde{x}^{A'}\wedge\tilde{\C}_{A'}=\tilde{\p}+\tilde{a}$, 
where $\tilde{a}$ is a $\mf{g}_o$-valued type $(0,1)$-form given by 
$\tilde{a}_{A'}=o^A(wf_{AA'}+p(\w_{AA'}+B_{AA'}))$ (for the case of the bundle $\mc{O}(p)[w]$).

\begin{remark}
The bundle $\mc{O}(p)[w]$ equipped with $\tilde{\p}_{\C}$ is a holomorphic line bundle: 
$\tilde{\p}_{\C}^2=0$.
\end{remark}

\begin{proof}
This is an immediate consequence of the fact that the connection $\tilde{\C}_{A'}$ is flat, 
which was proven in lemma \ref{lem-Cbs}.
\end{proof}

The fact that the line bundle $\mc{O}(p)[w]$ is holomorphic allows to define the notion of 
a {\em holomorphic section} of it, as a section $f$ such that $\tilde{\p}_{\C}f=0$.
Explicitly, this is exactly the condition (\ref{ccs}), so the line bundle $\mc{O}_{\T}(p)[w]$ 
over the 2D twistor space $\T$ can be characterized as the bundle of holomorphic 
sections of $\mc{O}(p)[w]$ with respect to the complex structure defined above.

Similarly, we can define an {\em anti-holomorphic section} of $\mc{O}(p)[w]$ as a section $g$
that satisfies $\p_{\C}g=0$, which is equivalently $\iota^{A}\C_{AA'}g=0$. 
The integrability condition for this is $\p^2_{\C}=0$, or explicitly $\iota^A\iota^B\Box^{\C}_{AB}g=0$.
Calculations analogous to the ones in section \ref{sec-FB} show that this condition is satisfied if 
$\iota^A$ is an SFR (which is automatic since $\iota^A$ is associated to the 2D twistor space $\T'$) 
and a two-fold PND (which we will assume from now on).
Therefore we can construct the line bundles $\mc{O}_{\T'}(p)[w]$ over $\T'$, by using 
anti-holomorphic sections of $\mc{O}(p)[w]$.
These can be used to generate solutions of the Teukolsky equations with opposite spin-weight 
to the one given by the equations in lemma \ref{mthm}.
To see this, let $g\in\G(\mc{O}_{\T'}(p)[w])$; then by an analogous calculation to the one leading
to (\ref{gwe}), we now have
\begin{equation}
 (\Box^{\C}-3p\Psi_2)g=0.
\end{equation}
Now commute the GHP operators with their primed versions in (\ref{sbox}); after tedious 
calculations one gets
\begin{align}
\nonumber \Box^{\C} =& 2[(\tho-p\rho-\tilde\rho)(\tho'-\rho')-(\edt-p\t-\tilde\t')(\edt'-\t')]\\
\nonumber &+2(w+1)(\rho\tho'+\rho'\tho-\t\edt'-\t'\edt-2\L-\Psi_2) \\
& -2(w+1)(p-w)(\rho\rho'-\t\t')+3p\Psi_2+2(p-1)(\k\k'-\sigma\sigma').
\end{align}
(This expression is valid in an arbitrary spacetime.) Choosing $p=n$, $w=-1$, it follows that
\begin{align}
\nonumber 0=& \;(\Box^{\C}-3n\Psi_2)g\\
  =& \; 2[(\tho-n\rho-\tilde\rho)(\tho'-\rho')-(\edt-n\t-\tilde\t')(\edt'-\t')]g, \label{teuk2}
\end{align}
which is the Teukolsky equation for the spin-weight $+n$ component of a massless 
free field with spin $n/2$ (see \cite{Teu73}).

\begin{remark}
 In this context we can also generate solutions of the `Fackerell-Ipser equation', which is the wave-like 
 equation satisfied by the spin-weight zero component of a Maxwell field in a type D spacetime.
 Namely, if $g$ is a section of $\mc{O}_{\T'}(0)[-1]$, then 
 \begin{equation}
 0=\Box^{\C}g= 2[(\tho-\tilde\rho)(\tho'-\rho')-(\edt-\tilde\t')(\edt'-\t')]g, \label{FI}
 \end{equation}
 which, after noting that the differential operator on the RHS is
 $(\Box+2\Psi_2+4\L)$,  is exactly the Fackerell-Ipser equation. (Actually it is a generalized version 
 including the Ricci scalar ---recall that $\L=R/24$.)
\end{remark}

\subsection{Massless free fields and symmetry operators}

Finally, it is worth discussing the construction of RH massless free fields in our 
present situation. We start with the Dirac case.
If $h^{-}\in\G(\mc{O}(-1)[-1])$ and $h^{+}\in\G(\mc{O}(1)[0])$, and we define 
$\phi^{-}_{A'}=o^A\C_{AA'}h^{-}$ (which is simply (\ref{RHD2})) and 
$\phi^{+}_{A'}=\iota^A\C_{AA'}h^{+}$, then a calculation analogous to the one leading 
to (\ref{idRHD2}) shows that these fields are solutions of the RH Dirac equation if 
$h^{\pm}$ are solutions of the corresponding wave equations.
This process is particularly interesting in relation to {\em symmetry operators} (see 
e.g. \cite{Araneda2016, And14} and references therein), by which we mean the idea of 
applying differential operators to solutions of the LH massless field equations 
in such a way that one constructs solutions of the RH field equations.
Consider for example a LH Dirac field $\vp_A$. 
The first guess is to put $h^{-}=\vp_A\iota^A$ and $h^{+}=\vp_Ao^A$, but, since the conformal 
weights of $\vp_A$, $o^A$, and $\iota^A$ are respectively $-1$, $0$ and $-1$, the
$h^{\pm}$ so defined would not have the correct conformal weights. To remedy this situation,
we consider a conformal factor $\mr\Om$ (i.e. a section of $\mc{O}(0)[1]$) such that 
$\C_a\mr\Om=0$. This is only possible if there is a non-trivial solution to 
$\C_a\mr\Om=\p_a\mr\Om+\mf{f}_a\mr\Om=0$, namely if $\mf{f}_a=-\mr\Om^{-1}\p_a\mr\Om$. 
(This case is particularly interesting and deserves some additional comments, 
see remark \ref{rem-ef} below.)
This condition is satisfied for instance in all type D conformal structures that 
admit an Einstein metric, since then the Bianchi identities imply that 
$\mf{f}_a=\Psi^{-1/3}_2\p_a\Psi^{1/3}_2$, thus one can choose $\mr\Om\propto\Psi^{-1/3}_2$.
(One must keep in mind though that this choice represents an explicit breaking of conformal invariance, 
since the Bianchi identities are not conformally invariant.)
An example of this is the Kerr-(A)dS spacetime.
It is also satisfied for type D conformal structures with a background Maxwell field 
whose PNDs are aligned to the gravitational ones, namely the Maxwell field has the form
$-2\phi o_{(A}\iota_{B)}$; in this case Maxwell equations imply that 
$\mf{f}_a=\phi^{-1/2}\p_a\phi^{1/2}$ and therefore $\mr\Om\propto\phi^{-1/2}$. 
This is the situation for example in the Kerr-Newman-(A)dS spacetime.

Now, for those situations where we have a non-trivial solution to $\C_a\mr\Om=0$ 
(such as the examples mentioned above), we can set $h^{-}=\mr\Om\vp_A\iota^A$ 
and $h^{+}=\mr\Om\vp_Ao^A$, then these fields have the correct $p$ and $w$ weights 
and solve the Teukolsky equations as long as $\vp_A$ solves the LH Dirac equation. 
We then have that each of the fields $\phi^{-}_{A'}$ and $\phi^{+}_{A'}$ defined above 
is a solution to the RH Dirac equation. But a straightforward calculation shows that
\begin{equation}\label{diffD}
 \phi^{-}_{A'}-\phi^{+}_{A'}=-\mr\Om\c_{A'}{}^{A}\vp_{A},
\end{equation}
thus, the two fields are actually the same as long as $\vp_A$ is a LH Dirac field.
That is, both symmetry operators coincide.

For RH Maxwell fields, we take $h^{-}\in\G(\mc{O}(-2)[-1])$, $h^{+}\in\G(\mc{O}(2)[1])$ 
and $h^{0}\in\G(\mc{O}(0)[0])$, and define the following vector potentials:
$A^{-}_{AA'}=o_Ao^B\C_{BA'}h^{-}$, $A^{+}_{AA'}=\iota_A\iota^B\C_{BA'}h^{+}$,
$A^{0}_{AA'}=o_A\iota^B\C_{BA'}h^{0}$, and $\tilde{A}^{0}_{AA'}=\iota_Ao^B\C_{BA'}h^{0}$.
These are all variants of the vector potential in (\ref{LHM2})-(\ref{RHM2}).
We first note that $A^0_{AA'}-\tilde{A}^0_{AA'}=\p_{AA'}h^0$, so these two potentials differ by 
a gauge transformation and we can consider only one of them, say $A^0_{AA'}$.
Now, the associated 2-forms $F^{-}_{ab}=2\c_{[a}A^{-}_{b]}$, etc. decompose into their 
SD (or RH) and ASD (or LH) parts, $\phi^{-}_{A'B'}$, $\psi^{-}_{AB}$, etc.
A similar calculation to the one in eqs. (\ref{LHM2})-(\ref{RHM2}) shows that
the LH parts $\psi^{0,\pm}_{AB}$ are zero if $h^{0,\pm}$ satisfy the corresponding 
wave equations; in all these cases the RH parts consequently satisfy Maxwell equations. 
For the construction of symmetry operators,
we can obtain solutions of the Teukolsky and Fackerell-Ipser equations starting from 
a LH Maxwell field $\vp_{AB}$, by setting $h^{-}=\mr\Om^2\vp_{AB}\iota^A\iota^B$,
$h^{+}=\mr\Om^2\vp_{AB}o^Ao^B$ and $h^{0}=\mr\Om^2\vp_{AB}o^A\iota^B$, 
where, as before, $\C_a\mr\Om=0$. But it is not difficult to show that 
\begin{align}
 & A^{-}_{AA'}-A^{+}_{AA'}=-\mr\Om^2(o_A\iota^C+\iota_Ao^C)\c_{A'}{}^{B}\vp_{BC}
 +\p_{AA'}(\mr\Om^2\vp_{BC}o^B\iota^{C}) \label{diffM1} \\
 & A^{-}_{AA'}-A^{0}_{AA'}=-\mr\Om^2 o_A\iota^C\c_{A'}{}^{B}\vp_{BC} \label{diffM2}
\end{align}
and similarly for the other possible combinations. Thus we see that, as long as $\vp_{AB}$ 
is a LH Maxwell field, all vector potentials differ by gauge transformations, hence they define 
the same RH Maxwell field $\phi_{A'B'}\equiv\phi^{0,\pm}_{A'B'}$, i.e. all symmetry 
operators coincide.

\begin{remark}\label{rem-ef}
The case in which $\C_a\mr\Om=0$ admits non-trivial solutions, namely 
$\mf{f}_a$ is an exact form, has close relations with the usual concept of 
hidden symmetries in General Relativity. First, the fact that $\mf{f}_a$ is closed implies 
that the Weyl connection $\slashed{\c}{}_{a}$ is actually the Levi-Civita connection 
of some metric in the conformal class, say $\slashed{g}{}_{ab}$. 
Furthermore, if $o^A$ and $\iota^A$ are both SFRs, then one can show 
(see \cite[eq. (2.20)]{Araneda2018}) that the almost-complex structure $J$ is parallel for 
$\slashed{\c}$, i.e. $\slashed{\c}{}_{a}J_{b}{}^{c}=0$, so $\slashed{g}{}_{ab}$ is a 
K\"ahler metric. Additionally, using that $\C_a(o^B\iota^C)=0$ and $\mf{f}_a=\mr\Om\p_a\mr\Om^{-1}$
it is easy to verify that the spinor field $X^{AB}:=\mr\Om o^{(A}\iota^{B)}$ satisfies 
$\c_{A'}{}^{(A}X^{BC)}=0$, namely it is a Killing spinor\footnote{The equation 
$\c_{A'}{}^{(A}X^{BC)}=0$ is conformally invariant, so $\c_{AA'}$ here is any Levi-Civita 
connection in the conformal class.}. 
Therefore, the connection (\ref{C}) somehow encodes the conformally K\"ahler structure and the 
existence of Killing spinors in all spacetimes where $\mf{f}_a$ is exact 
(which includes for example the Kerr-(A)dS and Kerr-Newman-(A)dS solutions). 
For a thorough analysis of conformally K\"ahler structures in 4 dimensions, we refer
the reader to \cite{Dun09}.
\end{remark}

\section{Final comments}\label{sec-FC}

The methods and ideas of twistor theory have proven to be extremely useful in a wide range of 
topics in theoretical and mathematical physics, such as string theory and scattering amplitudes, 
loop quantum gravity, integrable systems, quasi-local constructions of mass and angular 
momentum, etc. (see \cite{Ati17} for a recent review, and also references therein).
In this work we have argued that twistor structures are also present in 
perturbation theory of algebraically special spaces, by showing that the standard formalisms 
known in the literature (such as Teukolsky equations) have a geometric structure that is naturally 
interpreted in terms of a 2-dimensional twistor manifold.

The standard definition of (projective) twistor space is as the moduli space of certain complex 
2-dimensional (namely $\a$- or $\b$-) surfaces in a spacetime, and the requirement that this 
space be {\em three}-complex dimensional forces the conformal curvature to be SD or ASD.
We have studied geometric structures induced in a (conformal) spacetime 
by requiring instead the existence of a {\em two}-dimensional twistor manifold, and the relation 
of these structures with the description of linear massless fields propagating in the spacetime.
Our results are valid for conformal structures that {\em are not} necessarily SD or ASD,
but admit a null geodesic congruence that is shear-free (referred to as SFR along the text), 
and we have also assumed that the corresponding spinor field is a two-fold principal null 
direction of the SD curvature.
As mentioned, our main motivation for studying this problem was the recent result 
\cite{Araneda2018} that the Teukolsky equations (that are central to the black hole 
stability problem) are intimately related to a combination 
of conformal, complex and spinor geometry, which is a natural territory of twistor theory.

We have proceeded by following closely the standard constructions in twistor theory, 
adapted to our context where we have a 2D twistor space $\T$.
We showed that $\T$ induces in a natural way a connection $\C_a$ (given by eq. (\ref{C})) 
on spinor bundles in a conformal structure (lemma \ref{lem-IC}). 
We studied the curvature of this connection in section \ref{sec-curv}, which allowed us to 
show that the connection naturally induced on $\b$-surfaces is flat (lemma \ref{lem-Cbs}), 
and which also allowed us to construct line bundles over $\T$ since the 
integrability conditions are satisfied (section \ref{sec-lb}).
We have shown that, in particular, these constructions are intimately related to perturbation 
theory of black hole spacetimes, since the differential operators induced by $\T$ are 
closely associated to Teukolsky operators; see lemma \ref{mthm} and
eqs. (\ref{teuk1}), (\ref{teuk2}), (\ref{FI}).
Furthermore, we showed that sections of line bundles over $\T$ are automatically 
solutions of the Teukolsky equations for massless free fields, and this construction 
resembles the one associated to the Penrose transform, see remarks \ref{rem-teq}
and \ref{rem-we}. Likewise,
our construction of massless free fields with higher spin, that we did in section \ref{sec-mff}, 
gives formulas which are also reminiscent of the ones corresponding to the Penrose transform, 
see remarks \ref{rem-LHF} and \ref{rem-RHD}.
The special case in which we have two 2D twistor spaces, $\T$ and $\T'$, was considered in 
section \ref{sec-2TM} (this case includes for example the Kerr-(A)dS and 
Kerr-Newman-(A)dS solutions).
There, we gave an interpretation of the almost-complex structure (\ref{J}) 
(that is crucial for the construction of the appropriate connection) as induced 
naturally by the complex-manifold structure of the foliations by $\b$-surfaces; and we 
discussed an appropriate notion of holomorphic structures.
It was also shown how to generate solutions to the Teukolsky equations with opposite 
spin weight (and also to the Fackerell-Ipser equation) from line bundles over $\T$ and $\T'$.
We also showed that the different formulas for RH massless free fields (for a given spin), obtained 
from symmetry operators, are actually the same, see eqs. (\ref{diffD}), (\ref{diffM1}), (\ref{diffM2}).

For the case of gravitational perturbations of a curved spacetime, we have shown in section \ref{sec-gp} 
that linearized metric perturbations generated by Hertz potentials still possess a 2-dimensional twistor 
manifold to linear order, by proving that the corresponding linearized ASD curvature spinor is algebraically 
special (lemma \ref{lem-as}) and that this implies that the background SFR continues to be an SFR 
at the linear level (lemma \ref{lem-lSFR}, which is, as emphasized, a linearized version of the 
Goldberg-Sachs theorem in one direction).
This result and the ones mentioned before  suggest some possible research lines that we believe 
deserve further investigation \cite{Ara19}.
First, it would be interesting to understand the relationship between the 2D twistor space considered 
in this work
and the construction of an asymptotic twistor space\footnote{I am grateful to L. Mason for this suggestion.},
whose properties (such as the fact that it is an Einstein-K\"ahler manifold) 
might induce some interesting structures in the spacetime.
Second, we note that, in twistor theory, the treatment of the gravitational field is through consideration
of deformations of the complex structure of twistor space. 
At present it is not clear to us if some form of such procedures 
could also be applied to our case, and, even if so, whether it could lead to a better understanding 
of the structure of linearized gravity on curved spacetimes.
However, since this procedure involves the introduction of an infinity twistor that breaks conformal 
invariance, it might be interesting to see if a similar mechanism can be applied in our formalism 
in order to also break conformal invariance for the treatment of gravitational perturbations.
Finally, as already stated in remark \ref{rem-l2dts}, the results of section \ref{sec-gp}
suggest that there might be a close relationship between the constructions studied in this work
and the theory of Hyperheaven spaces \cite{Ple76, Fin76, Boy79}.
Results about all these problems will be presented elsewhere \cite{Ara19}.
In any case, in view of the techniques used in some recent very important results concerning 
the classical problems in mathematical Relativity (in particular see \cite{And19}), the 
application of spinor and twistor methods to these problems does not seem to be exhausted.

\subsection*{Acknowledgements}

It is a pleasure to thank Steffen Aksteiner, Lars Andersson, Thomas B\"ackdahl, Igor Khavkine and
Lionel Mason for very helpful discussions, that took place at the Institut Mittag-Leffler 
(Djursholm, Sweden) in the fall 2019.
I am also very thankful to Tim Adamo, Maciej Dunajski and George Sparling for comments 
about this work during the conference ``Twistors meet Loops in Marseille'', held at CIRM (France) 
in September 2019; 
in particular I want to thank M. Dunajski for several illuminating conversations in this conference 
and also during a visit to Cambridge University in November 2019.
The hospitality and support of all the institutions mentioned above are also gratefully acknowledged.
Finally I thank Gustavo Dotti, Jos\'e Luis Jaramillo, Oscar Reula and Juan Valiente Kroon 
for supportive comments on this work and on a previous version of this manuscript.
This work is partially supported by a postdoctoral fellowship from CONICET (Argentina).


\begin{thebibliography}{99}

\bibitem{Ada17} 
  T.~Adamo,
  {\em Lectures on twistor theory},
  PoS Modave {\bf 2017}, 003 (2018)
  \href{https://arxiv.org/abs/1712.02196}{[arXiv:1712.02196 [hep-th]]}.

\bibitem{Aks11} 
  S.~Aksteiner and L.~Andersson,
  {\em Linearized gravity and gauge conditions,}
  Class.\ Quant.\ Grav.\  {\bf 28}, 065001 (2011)
  \href{https://arxiv.org/abs/1009.5647}{[arXiv:1009.5647 [gr-qc]]}.

\bibitem{Aks16} 
  S.~Aksteiner, L.~Andersson and T.~B\"{a}ckdahl,
  {\em New identities for linearized gravity on the Kerr spacetime},
  Phys.\ Rev.\ D {\bf 99}, no. 4, 044043 (2019)
  \href{https://arxiv.org/abs/1601.06084}{[arXiv:1601.06084 [gr-qc]]}.

\bibitem{And09} 
  L.~Andersson and P.~Blue,
  {\em Hidden symmetries and decay for the wave equation on the Kerr spacetime},
  Annals of Mathematics, vol. 182, no. 3, 2015, pp. 787--853.
  \href{https://arxiv.org/abs/0908.2265}{[arXiv:0908.2265 [math.AP]]}.

 \bibitem{And14} 
   L.~Andersson, T.~B\"{a}ckdahl and P.~Blue,
   {\em Second order symmetry operators},
   Class.\ Quant.\ Grav.\  {\bf 31}, 135015 (2014)
   \href{https://arxiv.org/abs/1402.6252}{[arXiv:1402.6252 [gr-qc]]}.

\bibitem{And19} 
  L.~Andersson, T.~B\"{a}ckdahl, P.~Blue and S.~Ma,
  {\em Stability for linearized gravity on the Kerr spacetime},
  \href{https://arxiv.org/abs/1903.03859}{arXiv:1903.03859 [math.AP]}.

\bibitem{Araneda2016} 
  B.~Araneda,
  {\em Symmetry operators and decoupled equations for linear fields on black hole spacetimes},
  Class.\ Quant.\ Grav.\  {\bf 34}, no. 3, 035002 (2017)
  \href{https://arxiv.org/abs/1610.00736}{[arXiv:1610.00736 [gr-qc]]}.

\bibitem{Araneda2017} 
  B.~Araneda,
  {\em Generalized wave operators, weighted Killing fields, and perturbations of higher dimensional spacetimes},
  Class.\ Quant.\ Grav.\  {\bf 35}, no. 7, 075015 (2018)
  \href{https://arxiv.org/abs/1711.09872}{[arXiv:1711.09872 [gr-qc]]}.

\bibitem{Araneda2018} 
  B.~Araneda, 
  {\em Conformal invariance, complex structures and the Teukolsky connection},
  Class.\ Quant.\ Grav.\  {\bf 35}, no. 17, 175001 (2018)
  \href{https://arxiv.org/abs/1805.11600}{[arXiv:1805.11600 [gr-qc]]}.

\bibitem{Ara19}
 B.~Araneda, 
 work in progress.
 
\bibitem{Ati17} 
  M.~Atiyah, M.~Dunajski and L.~Mason,
  {\em Twistor theory at fifty: from contour integrals to twistor strings},
  Proc.\ Roy.\ Soc.\ Lond.\ A {\bf 473}, no. 2206, 20170530 (2017)
  \href{https://arxiv.org/abs/1704.07464}{[arXiv:1704.07464 [hep-th]]}.

\bibitem{Bailey}
 T.~N.~Bailey, 
 {\em A conformally invariant connection and the space of leaves of a shear free congruence}, 
 Twistor Newsletter $\mbb{TN}26$ (1988), pp. 31--39

\bibitem{BaileyTN262}
  T.~N.~Bailey, 
 {\em Relative cohomology power series, Robinson's theorem and multipole expansions},
 Twistor Newsletter $\mbb{TN}26$ (1988), pp. 40--43
 
\bibitem{Bailey2}
 T.~N.~Bailey, 
 {\em The space of leaves of a shear--free congruence, multipole expansions, and Robinson's theorem},
 J. Math. Phys. {\bf 32}, 1465 (1991)

\bibitem{Bailey3}
  T.~N.~Bailey, 
  {\em Complexified conformal almost-Hermitian structures and the conformally invariant eth and thorn operators},
  Class.\ Quant.\ Grav.\  {\bf 8}, no. 1, (1991)

\bibitem{Bin02} 
  D.~Bini, C.~Cherubini, R.~T.~Jantzen and R.~J.~Ruffini,
  {\em Teukolsky master equation: De Rham wave equation for the gravitational and electromagnetic fields in vacuum},
  Prog.\ Theor.\ Phys.\  {\bf 107}, 967 (2002)
  \href{https://arxiv.org/abs/gr-qc/0203069}{[gr-qc/0203069]}.

\bibitem{Boy79} 
  C.~P.~Boyer, J.~D.~Finley, III and J.~F.~Plebanski,
  {\em Complex General Relativity, H And HH Spaces: A Survey Of One Approach},
  in Held, A., editor, General Relativity and Gravitation, Vol. II, pages 241-281. Plenum, New York

\bibitem{Cal06}
  D.~M.~J.~Calderbank, 
  {\em Selfdual 4-Manifolds, Projective Surfaces, and the Dunajski-West Construction},
  SIGMA, {\bf 10} (2014), 035, 18 pp. 
  \href{https://arxiv.org/abs/math/0606754}{[math/0606754 [math.DG]]}
 
\bibitem{Dain00} 
  S.~Dain and O.~M.~Moreschi,
  {\em The Goldberg-Sachs theorem in linearized gravity},
  J.\ Math.\ Phys.\  {\bf 41}, 6296 (2000)
  \href{https://arxiv.org/abs/gr-qc/0203057}{[gr-qc/0203057]}.

\bibitem{Dun06} 
  M.~Dunajski and S.~West,
  {\em Anti-self-dual conformal structures with null Killing vectors from projective structures},
  Commun.\ Math.\ Phys.\  {\bf 272}, 85 (2007)
  \href{https://arxiv.org/abs/math/0601419}{[math/0601419 [math.DG]]}.

\bibitem{Dun09} 
  M.~Dunajski and P.~Tod,
  {\em Four--Dimensional Metrics Conformal to Kahler},
  Math.\ Proc.\ Cambridge Phil.\ Soc.\  {\bf 148}, 485 (2010)
  \href{https://arxiv.org/abs/0901.2261}{[arXiv:0901.2261 [math.DG]]}.

\bibitem{Dun10} 
  M.~Dunajski,
  {\em Solitons, instantons, and twistors}
  (Oxford graduate texts in mathematics. 19)

\bibitem{Fin76} 
  J.~D.~Finley, III and J.~F.~Plebanski,
  {\em The Intrinsic Spinorial Structure of Hyperheavens},
  J.\ Math.\ Phys.\  {\bf 17}, 2207 (1976).

\bibitem{Fla74}
  E. J. Flaherty Jr., 
  {\em An integrable structure for type D spacetimes}, 
  Phys. Lett. A {\bf 46}, 391--392 (1974)

\bibitem{Fla76}
  E. J. Flaherty Jr., 
  {\em Hermitian and K\"ahlerian Geometry in Relativity}, 
  Springer Lecture Notes in Physics, Vol. 46 (Springer-Verlag, New York, 1976)

\bibitem{Fro17} 
  V.~Frolov, P.~Krtous and D.~Kubiznak,
  {\em Black holes, hidden symmetries, and complete integrability},
  Living Rev.\ Rel.\  {\bf 20}, no. 1, 6 (2017)
  \href{https://arxiv.org/abs/1705.05482}{[arXiv:1705.05482 [gr-qc]]}.

\bibitem{Gover2013}
 A. R. Gover and P. Nurowski, 
 {\em Calculus and invariants on almost complex manifolds, including projective and conformal geometry}, 
 Illinois J. Math. 57 (2013), no. 2, 383--427 
 \href{https://arxiv.org/abs/1208.0648}{[arXiv:1208.0648 [math.DG]]}.

\bibitem{Huggett} 
  S.~A.~Huggett and K.~P.~Tod,
  {\em An Introduction To Twistor Theory},
  Cambridge, Uk: Univ. Pr. (1985) 145 P. (London Mathematical Society Student Texts, 4)

\bibitem{Hughston} 
  L.~P.~Hughston {\it et al.},
  {\em Advances In Twistor Theory},
  San Francisco, Usa: Pitman (1979) 335 P. (Research Notes In Mathematics, 37)

\bibitem{Kegeles} 
  L.~S.~Kegeles and J.~M.~Cohen,
  {\em Constructive Procedure For Perturbations Of Space-times},
  Phys.\ Rev.\ D {\bf 19}, 1641 (1979).

\bibitem{Kha14} 
  I.~Khavkine,
  {\em The Calabi complex and Killing sheaf cohomology},
  J.\ Geom.\ Phys.\  {\bf 113}, 131 (2017)
  \href{https://arxiv.org/abs/1409.7212}{[arXiv:1409.7212 [gr-qc]]}.

\bibitem{Mason96} 
  L.~J.~Mason and N.~M.~J.~Woodhouse,
  {\em Integrability, selfduality, and twistor theory},
  Oxford, UK: Clarendon (1996) 364 p. (London Mathematical Society monographs, new series: 15)

\bibitem{Nak03} 
  M.~Nakahara,
  {\em Geometry, topology and physics},
  Boca Raton, USA: Taylor $\&$ Francis (2003) 573 p

\bibitem{Pen67} 
  R.~Penrose,
  {\em Twistor algebra},
  J.\ Math.\ Phys.\  {\bf 8}, 345 (1967).

\bibitem{PMC73}
 R. Penrose and M. MacCallum, 
 {\em Twistor theory: an approach to the quantisation of fields and space-time}, 
  Physics Reports 6.4 (1973): 241-315.

\bibitem{Pen76} 
  R.~Penrose,
  {\em Nonlinear Gravitons and Curved Twistor Theory},
  Gen.\ Rel.\ Grav.\  {\bf 7}, 31 (1976).

\bibitem{Penrose1}
 R.~Penrose and W.~Rindler,
  {\em Spinors And Space-time. 1. Two Spinor Calculus And Relativistic Fields},
  Cambridge, Uk: Univ. Pr. (1984) 458 P. (Cambridge Monographs On Mathematical Physics)

\bibitem{Penrose2}
  R.~Penrose and W.~Rindler,
  {\em Spinors And Space-time. Vol. 2: Spinor And Twistor Methods In Space-time Geometry},
  Cambridge, Uk: Univ. Pr. (1986) 501p

\bibitem{Ple76} 
  J.~F.~Plebanski and I.~Robinson,
  {\em Left-Degenerate Vacuum Metrics},
  Phys.\ Rev.\ Lett.\  {\bf 37}, 493 (1976).

\bibitem{Pra18} 
  K.~Prabhu and R.~M.~Wald,
  {\em Canonical Energy and Hertz Potentials for Perturbations of Schwarzschild Spacetime},
  Class.\ Quant.\ Grav.\  {\bf 35}, no. 23, 235004 (2018)
  \href{https://arxiv.org/abs/1807.09883}{[arXiv:1807.09883 [gr-qc]]}.

\bibitem{Teu72} 
  S.~A.~Teukolsky,
  {\em Rotating black holes - separable wave equations for gravitational and electromagnetic perturbations},
  Phys.\ Rev.\ Lett.\  {\bf 29}, 1114 (1972).

\bibitem{Teu73}
  S.~A.~Teukolsky,
  {\em Perturbations of a rotating black hole. 1. Fundamental equations for gravitational electromagnetic
  and neutrino field perturbations},
  Astrophys.\ J.\  {\bf 185} (1973) 635.

\bibitem{VK16}
 J.~A.~Valiente Kroon,
 {\em Conformal Methods in General Relativity}. 
 Cambridge: Cambridge University Press. doi:10.1017/CBO9781139523950  

\bibitem{WP70} 
  M.~Walker and R.~Penrose,
  {\em On quadratic first integrals of the geodesic equations for type [22] spacetimes,}
  Commun.\ Math.\ Phys.\  {\bf 18}, 265 (1970).  

\bibitem{Ward}
 R. S. Ward and R. O. Wells,
 {\em Twistor Geometry and Field Theory}, 
 Cambridge University Press (1991)

\bibitem{Wells}
 R. O. Wells, 
 {\em Differential Analysis on Complex Manifolds}, 
 Berlin, Heidelberg, New York: Springer (1980)

\bibitem{Wit03}
  E.~Witten,
  {\em Perturbative gauge theory as a string theory in twistor space},
  Commun.\ Math.\ Phys.\  {\bf 252} (2004) 189
  \href{https://arxiv.org/abs/hep-th/0312171}{[hep-th/0312171]}



\end{thebibliography}
\end{document}